\providecommand{\U}[1]{\protect\rule{.1in}{.1in}}
\newtheorem{theorem}{Theorem}
\newtheorem{corollary}[theorem]{Corollary}
\newtheorem{definition}[theorem]{Definition}
\newtheorem{remark}[theorem]{Remark}
\newenvironment{proof}[1][Proof]{\noindent \textbf{#1.} }{\  \rule{0.5em}{0.5em}}
\begin{document}

\title{\textbf{Robust Wald-type tests in GLM with random design based on
minimum density power divergence estimators}}
\author{Basu, A$^1$; Ghosh, A$^1$; Mandal, A$^2$; Martin, N$^3$ and Pardo, L$^3$\\
$^1$Indian Statistical Institute, Kolkata 700108, India\\
$^2$Wayne State University, Detroit, MI 48202, USA\\
$^3$Complutense University of Madrid, 28040 Madrid, Spain}
\date{\today}
\maketitle

\begin{abstract}
We consider the problem of robust inference under the  generalized
linear model (GLM) with stochastic covariates. We derive the properties of the
minimum density power divergence estimator of the parameters in GLM with
random design and use this estimator to propose robust Wald-type tests for
testing any general composite null hypothesis about the GLM. The asymptotic and
robustness properties of the proposed tests are also examined for the GLM with
random design. Application of the proposed robust inference procedures to the
popular Poisson regression model for analyzing count data is discussed
in detail both theoretically and numerically through simulation studies and real data examples. 
\end{abstract}

\bigskip\bigskip


\noindent\underline{\textbf{Keywords and phrases}}\textbf{:} GLM; Minimum density power divergence estimator; Wald-type tests; Robustness.

\section{Introduction\label{sec1}}

Parametric statistical modelling is an important tool in statistical analysis of real data. 
Whenever the parametric assumption is satisfied, the parametric method will be much more efficient than the corresponding non-parametric methods. 
However, classical parametric methods, including those based on the maximum likelihood principle, 
can be very significantly influenced by the presence of outlying observations in the data, even in a very small proportion. 
The data analyst would, therefore, like to construct and use such procedures which exhibit a high degree of robustness 
(in the sense of outlier stability) with little loss in asymptotic efficiency. 
In the current age of big data, the outlier problem is as relevant as ever. 
In this paper we will deal with the robustness issue in case of generalized linear models where the covariates are stochastic (rather than fixed).

Regression analysis is a basic statistical data analysis technique across different disciplines of applied sciences,
which helps us to model a response variable in terms of several associated covariates.
One major application of regression is also in predicting future observations from the values of the model covariates 
as well as in investigating if a covariate has a statistically significant role in explaining the variability in the response.  
The standard linear regression model is the most common one applicable to a continuous response
having a linear relationship with each covariate. 
We consider a much wider class of regression models, namely 
generalized linear models (GLMs),  first introduced by Nelder and Wedderburn
(1972) and later expanded by McCullagh and Nelder (1989);
they  represent a method of extending standard linear regression to incorporate a variety of responses
including distributions of counts, binary or positive values as well as several types of possible relationship
between the response and covariates (under suitable restrictions). 
Here, the observations
$\left(  y_{i},\boldsymbol{x}_{i}^{T}\right)  ^{T}\in\mathbb{R}^{k+1},\text{
}1\leq i\leq n,$
are assumed to be independent and identically distributed (IID) realizations 
of the random variables $(Y,\boldsymbol{X}^{T})^{T}$
in such a way that the conditional distribution of $Y$ given $\boldsymbol{X=x}$ belongs
to the general exponential family of distributions having density function,
with respect to a convenient $\sigma$-finite measure, given by
\begin{equation}
f_{Y|\boldsymbol{X=x}}\left(  y,\theta,\phi\right)  =\exp\left\{
\frac{y\theta-b(\theta)}{a(\phi)}+c\left(  y,\phi\right)  \right\}  ,\ \ \ \ y \in \mathbb{R}, \theta \in \mathbb{R}^p, \phi >0,
\label{1}%
\end{equation}
where the canonical parameter $\theta$ is an unknown measure of location
depending on the predictor $\boldsymbol{x}$ and $\phi$ is a known or unknown
nuisance scale or dispersion parameter typically required to produce standard
errors following Gaussian, gamma or inverse Gaussian distributions. The
functions $a(\phi)$, $b(\theta)$\ and $c\left(  y,\phi\right)  $\ are known.
In particular, $a(\phi)$ is set to $1$ for binomial, Poisson, and negative
binomial distributions (known $\phi$), and it does not enter into the
calculations for standard errors. The mean of the conditional distribution of
$Y$ given $\boldsymbol{X=x}$, 
namely  $\mu_{Y|\boldsymbol{X=x}}(\theta)=E\left[Y|\boldsymbol{X=x}\right] =b^{\prime}(\theta)$, 
is dependent only on $\theta$ 
and is assumed, according to GLMs, to be modeled linearly with respect to
$\boldsymbol{x}$\ through a known link function, $g$, i.e.,%
$g(\mu_{Y|\boldsymbol{X=x}}(\theta))=\boldsymbol{x}^{T}\boldsymbol{\beta},$
where $g$ is a monotone and differentiable function and $\boldsymbol{\beta\in
}\mathbb{R}^{k}$ is an unknown parameter. In this setting, since
$\theta=\theta\left(  \boldsymbol{x}^{T}\boldsymbol{\beta}\right)  $, we shall
also denote the density in  (\ref{1}) by $f\left(  y,\boldsymbol{x}^{T}\boldsymbol{\beta}%
,\phi\right)$.
The statistical problem is then to first estimate the regression coefficients $\boldsymbol{\beta}$ 
and the variance parameter $\phi$ (if unknown) through appropriate estimation methods like maximum likelihood estimation 
and use these estimates for subsequent hypotheses testing and prediction for the underlying research applications.

To clarify the objective of the present paper, we note that the $n\times k$ matrix 
$\mathbb{X} = [\boldsymbol{x}_1, \cdots, \boldsymbol{x}_n]^T$ is referred to as the design matrix 
in the context of regression. As per the above formulations, all rows of this design matrix are 
IID copies of the $p$-dimensional (covariate) random variable $\boldsymbol{X}$. 
Such situations are referred to as the \textit{random design models} which we focus on the present paper. 
Another alternative option, mostly used for planned design of experiments, is the fixed design models
where each row of the design matrix is assumed to be non-stochastic and pre-fixed.
One can verify that for most common applications, if we assume fixed design set-up
while the values of each row actually came from some underlying distributions,
the estimators of $\boldsymbol{\beta}$ would be identical to the random design case 
but their asymptotic properties (including the variance and hence the standard errors) 
may be quite different depending  on the stochastic structure of the true random design matrix
which, in turn, affects the hypotheses testing results and any predictive confidence intervals.
This can be illustrated through a simple example of maximum likelihood estimator (MLE) of $\boldsymbol{\beta}$ 
under the simple linear regression model, a special case of GLM with $f\equiv N(\theta, \phi)$ and the identity link ($g$).
In this particular example, the MLE is $[\mathbb{X}^T\mathbb{X}]^{-1}\mathbb{X}^T\boldsymbol{y}$
with $\boldsymbol{y}=(y_1, \ldots, y_n)^T$ for both of fixed and random designs,
but the asymptotic variance are given, respectively, by $\Sigma_{\rm fix} = n^{-1}\phi[\mathbb{X}^T\mathbb{X}]^{-1}$ 
for fixed design and $\Sigma_{\rm ran} = n^{-1}\phi E[\mathbb{X}^T\mathbb{X}]^{-1}$ for the random design.
Now, suppose the random design is true having  $p=2$, $\boldsymbol{X}=(1, X)^T$ with $E(X)=0$ and $E(X^2)=\sigma_X^2$;
then ideally the asymptotic variance of MLE of $\boldsymbol{\beta}$ should be 
$\Sigma_{\rm ran} =\phi \begin{bmatrix}
\begin{array}{cc}
1 & 0\\
0 & \sigma_X^{-2}
\end{array}
\end{bmatrix}$, 
a constant independent of the observed data. 
However, if one wrongly assume that the design matrix is fixed based on the observed data,
the corresponding asymptotic variance matrix will turn out to be  
$\Sigma_{\rm fix} =\phi \begin{bmatrix}
\begin{array}{cc}
1 & n^{-1}\sum_{i=1}^nx_i \\
n^{-1}\sum_{i=1}^nx_i  & n^{-1}\sum_{i=1}^nx_i^2 
\end{array}
\end{bmatrix}^{-1}$, 
which can clearly be substantially different from the true $\Sigma_{\rm ran}$ based on the observed data 
for finite sample sizes (e.g., $X$ may be heavy tailed), 
and consequently all the inferential results (testing and confidence intervals) could be adversely affected. 
This motivated the study of GLMs having random design matrices separately from the fixed design cases.

However, the usual inference procedures based on the maximum likelihood and the maximum quasi-likelihood
estimators are extremely non-robust against the data contaminations or model misspecification
under both the fixed or random design set up; these have been studied extensively in the literature for different GLMs
and their non-robustness have been demonstrated by several authors
(Hampel et al. 1986; Stefanski et al., 1986; K\"{u}nsch et al., 1989; Morgenthaler, 1992, and many others). 
Modern complex datasets are prone to having outlying observations either due to 
some confounded effects or error in any stage of data processing which, in turn,
yields incorrect statistical results and research insights if a non-robust method is used to analyze them. 
Consequently,  robust procedures for GLMs have been considered to robustify the MLE. 
Stefanski et al. (1986) studied optimally bounded score functions for the GLM. 
They generalized the results obtained by Krasker and Welsch (1982) for classical linear models. 
The robust estimator of Stefanski et al. (1986) is, however, difficult to compute. 
K\"{u}nsch et al. (1989) introduced another estimator, called the conditionally unbiased bounded-influence estimator. 
The development of robust models for the GLM continued with the work of Morgenthaler (1992). 
More recently, Cantoni and Ronchetti (2001) proposed a robust approach based on robust quasi-deviance
functions for estimation and variable selection. 
Another class of estimators are the M-estimators proposed by Bianco and Yohai (1996)  
and further studied by Croux and Haesbroeck (2003) for logistic regression, a special case of GLMs. 
Bianco et al. (2013) proposed general M-estimators for GLM for data sets with missing values in the responses.
Valdora and Yohai (2014) proposed a family of robust estimators for GLM based
on M-estimators after applying a variance stabilizing transformation to the response. 
More recent works on robust inference in GLMs also include Aeberhard et al. (2014) and Marazzi et al. (2019).
Along this line of research,  Ghosh and Basu (2016) presented a robust estimator
assuming a fixed design, based on the density power divergence approach.
In this paper, we will first extend it to the random design GLMs 
and subsequently discuss its properties in developing robust hypotheses testing procedures.
Throughout this paper, our focus will be on \textit{robustness against data contamination} (e.g., outliers) 
among the sample observations and discuss the properties of the proposed estimators and tests 
in respect of safeguarding against such data contamination.

To define our estimator for the random design GLMs as discussed above, 
we note that the observations  $(y_{1},\boldsymbol{x}_{1}^{T})^{T},...,(y_{n},\boldsymbol{x}_{n}^{T})^{T}$ 
indeed form a random sample from $(Y,\boldsymbol{X}^{T})^{T}$
and  
the density function of $y_{i}|\boldsymbol{X=x}_{i}$ is denoted as $f(y_{i},\boldsymbol{x}_{i}^{T}\boldsymbol{\beta},\phi)$. 
For the cases of non-random design with fixed $\boldsymbol{x}_i$,  Ghosh and Basu (2016) considered a particular class of 
$M$-estimators depending  on a tuning parameter $\alpha>0$, which solved the estimating equation %
\[%
{\displaystyle\sum\limits_{i=1}^{n}}
\boldsymbol{\Psi}_{\alpha}(y_{i},\boldsymbol{x}_{i}^{T}\boldsymbol{\beta}%
,\phi)=\boldsymbol{0},
\]
where
\begin{equation}
\boldsymbol{\Psi}_{\alpha}\left(  y_{i},\theta_{i},\phi,\boldsymbol{x}_{i}%
^{T}\boldsymbol{\beta},\alpha\right)  =%
{\displaystyle\int}
\boldsymbol{u}(y,\boldsymbol{x}_{i}^{T}\boldsymbol{\beta},\phi)f^{1+\alpha
}(y,\boldsymbol{x}_{i}^{T}\boldsymbol{\beta},\phi)dy-\boldsymbol{u}%
(y_{i},\boldsymbol{x}_{i}^{T}\boldsymbol{\beta},\phi)f^{\alpha}(y_{i}%
,\boldsymbol{x}_{i}^{T}\boldsymbol{\beta},\phi), \label{2}%
\end{equation}
with $
\boldsymbol{u}(y_{i},\boldsymbol{x}_{i}^{T}\boldsymbol{\beta},\phi
)=\frac{\partial\log f(y_{i},\boldsymbol{x}_{i}^{T}\boldsymbol{\beta},\phi
)}{\partial\boldsymbol{\eta}}
$
and
$
\boldsymbol{\eta}=(\boldsymbol{\beta}^{T},\phi)^{T},
$
if $\phi$ is unknown, and $\boldsymbol{\eta}=\boldsymbol{\beta}$, otherwise.
In Ghosh and Basu (2016) it was established that
\[
\boldsymbol{u}(y_{i},\boldsymbol{x}_{i}^{T}\boldsymbol{\beta},\phi)=%
\begin{pmatrix}
\frac{\partial\log f(y_{i},\boldsymbol{x}_{i}^{T}\boldsymbol{\beta},\phi
)}{\partial\boldsymbol{\beta}}\\
\frac{\partial\log f(y_{i},\boldsymbol{x}_{i}^{T}\boldsymbol{\beta},\phi
)}{\partial\phi}%
\end{pmatrix}
=%
\begin{pmatrix}
K_{1}(y_{i},\boldsymbol{x}_{i}^{T}\boldsymbol{\beta},\phi)\boldsymbol{x}_{i}\\
K_{2}(y_{i},\boldsymbol{x}_{i}^{T}\boldsymbol{\beta},\phi)
\end{pmatrix}
,
\]
for unknown $\phi$, where
\begin{align*}
K_{1}(y_{i},\boldsymbol{x}_{i}^{T}\boldsymbol{\beta},\phi)  &  =\frac
{y_{i}-\mu(\theta_{i})}{\sigma^{2}(\theta_{i})g^{\prime}\left(  \mu(\theta
_{i})\right)  },
~~~ \sigma^{2}(\theta_{i})    =\mathrm{Var}\left[  Y_{i}|\boldsymbol{X=x}%
_{i}\right]  =a(\phi)b^{\prime\prime}(\theta_{i}),\\
K_{2}(y_{i},\boldsymbol{x}_{i}^{T}\boldsymbol{\beta},\phi)  &  =-\frac
{y_{i}\theta_{i}-b\left(  \theta_{i}\right)  }{a^{2}(\phi)}a^{\prime}%
(\phi)+\frac{\partial c\left(  y_{i},\phi\right)  }{\partial\phi}.
\end{align*}
Therefore, defining
\begin{equation}
\gamma_{j,\alpha}(\boldsymbol{x}_{i})=\int K_{j}(y,\boldsymbol{x}_{i}%
^{T}\boldsymbol{\beta},\phi)f^{1+\alpha}(y,\boldsymbol{x}_{i}^{T}%
\boldsymbol{\beta},\phi)dy,\quad\text{for }j=1,2, \label{gamma}%
\end{equation}
we get
\[
\boldsymbol{\Psi}_{\alpha}(y_{i},\boldsymbol{x}_{i}^{T}\boldsymbol{\beta}%
,\phi)=%
\begin{pmatrix}
\left(  \gamma_{1,\alpha}(\boldsymbol{x}_{i})-K_{1}(y_{i},\boldsymbol{x}%
_{i}^{T}\boldsymbol{\beta},\phi)f^{\alpha}(y_{i},\boldsymbol{x}_{i}%
^{T}\boldsymbol{\beta},\phi)\right)  \boldsymbol{x}_{i}\\
\gamma_{2,\alpha}(\boldsymbol{x}_{i})-K_{2}(y_{i},\boldsymbol{x}_{i}%
^{T}\boldsymbol{\beta},\phi)f^{\alpha}(y_{i},\boldsymbol{x}_{i}^{T}%
\boldsymbol{\beta},\phi)
\end{pmatrix}
,
\]
and the estimating equations are given by
\begin{align}%
{\displaystyle\sum\limits_{i=1}^{n}}
\left(  \gamma_{1,\alpha}(\boldsymbol{x}_{i})-K_{1}(y_{i},\boldsymbol{x}%
_{i}^{T}\boldsymbol{\beta},\phi)f^{\alpha}(y_{i},\boldsymbol{x}_{i}%
^{T}\boldsymbol{\beta},\phi)\right)  \boldsymbol{x}_{i}  &  =\boldsymbol{0}%
,\label{3}\\%
{\displaystyle\sum\limits_{i=1}^{n}}
\left(  \gamma_{2,\alpha}(\boldsymbol{x}_{i})-K_{2}(y_{i},\boldsymbol{x}%
_{i}^{T}\boldsymbol{\beta},\phi)f^{\alpha}(y_{i},\boldsymbol{x}_{i}%
^{T}\boldsymbol{\beta},\phi)\right)   &  =0. \label{4}%
\end{align}
Notice that for known $\phi$, the unique estimating equation is (\ref{3}). It
is clear that
\[
\mathrm{E}\left[  \boldsymbol{\Psi}_{\alpha}(Y,\boldsymbol{X}^{T}%
\boldsymbol{\beta},\phi)|\boldsymbol{X}=\boldsymbol{x}\right]  =\boldsymbol{0}%
,
\]
when the conditional distribution of $Y$ given the covariates belongs to the assumed GLM family
and hence the estimators considered in Ghosh and Basu (2016) are conditionally Fisher-consistent
at the model for random design as well. 
In addition, since%
\begin{align}
\mathrm{E}\left[  \boldsymbol{\Psi}_{\alpha}(Y,\boldsymbol{X}^{T}%
\boldsymbol{\beta},\phi)\right]  =\boldsymbol{0}, \label{EQ:func_est}%
\end{align}
these estimators are also unconditionally  Fisher consistent under random design GLMs as well.
Let us denote $\widehat{\boldsymbol{\eta}}_{\alpha}$ as the estimator of $\boldsymbol{\eta}$, 
obtained by solving  equations (\ref{3}) and (\ref{4}),
which we refer to as the minimum density power divergence estimator (MDPDE) of $\boldsymbol{\eta}$. 
Under suitable differentiability properties of the functions $a(\cdot)$, $b(\cdot)$, $c(\cdot)$ and $g(\cdot)$, 
the equations (4) and (5) are indeed the estimating equations for obtaining the MDPDEs  of the parameter $\eta$; 
see Basu et al. (1998), Ghosh and Basu (2013) and Ghosh and Basu (2016) for a general description of the density power divergence as well as the formulation of the divergence in the generalized linear models scenario.  
Ghosh and Basu (2016) derived the asymptotic distribution of  $\widehat{\boldsymbol{\eta}}_{\alpha}$ assuming that
$\boldsymbol{X}_{i}$, $i=1,...,n$, are non-random (fixed design). 

The primary purpose of this paper is to present the
asymptotic distribution as well as the robustness properties of the minimum density power divergence estimator 
$\widehat{\boldsymbol{\eta}}_{\alpha}$ when $\boldsymbol{X}_{i}$, $i=1,...,n$, are generated by  a random design. 
These are seen to be quite different from those developed under the fixed-design set-up in Ghosh and Basu (2016)
and may be hampered in the same way as illustrated earlier for the MLEs if the design matrix is wrongly assumed to be fixed.
Subsequently,  based on the estimator $\widehat{\boldsymbol{\eta}}_{\alpha}$, 
a family of robust Wald-type tests is introduced.
The properties of the test statistics depend directly on the newly derived properties of the estimator;
we study the asymptotic and robustness properties 
along with appropriate numerical illustrations.

The structure of the paper is as follows. In
Section \ref{sec2} we present the asymptotic distribution of the MDPDE of
$\boldsymbol{\eta}$ for the random design case. Section \ref{sec3} 
introduces Wald-type tests for testing general linear hypothesis on parameters
under study and establishes their asymptotic distribution. The robustness
properties of the Wald-type tests are
studied in Section \ref{sec4}. The Poisson regression model under the random
design is studied in Section \ref{sec5},  and finally, Section \ref{sec6} presents a detailed simulation study illustrating the benefits  of our proposal.

\section{Properties of the MDPDEs under Random Design\label{sec2}}

Together with the notation of Section 1, 
let us assume that $\boldsymbol{X}$ represents the vector of $k$ (random) explanatory variables
and the marginal distribution of $\boldsymbol{X}$ is denoted by $G(\boldsymbol{x})$. 
In the following we first consider the asymptotic properties of the MDPDE and
thereafter, study the  corresponding robustness properties.

\subsection{Asymptotic Properties \label{sec2.1}}

In order to derive the asymptotic distribution of $\widehat{\boldsymbol{\eta}%
}_{\alpha}$, we are going to follow the same scheme as  given in Theorem 10.7 of
Maronna et al. (2006)  for M-estimators. Through this, the asymptotic distribution of
$\widehat{\boldsymbol{\eta}}_{\alpha}$ is  given by%
\[
\sqrt{n}(\widehat{\boldsymbol{\eta}}_{\alpha}-\boldsymbol{\eta}_{0}%
)\underset{n\rightarrow\infty}{\longrightarrow}\mathcal{N}(\boldsymbol{0}%
,\boldsymbol{\Sigma}_{\alpha}(\boldsymbol{\eta}_{0})),
\]
where
$
\boldsymbol{\Sigma}_{\alpha}(\boldsymbol{\eta}_{0})=\boldsymbol{J}_{\alpha
}^{-1}(\boldsymbol{\eta}_{0})\boldsymbol{K}_{\alpha}(\boldsymbol{\eta}%
_{0})\boldsymbol{J}_{\alpha}^{-1}(\boldsymbol{\eta}_{0}),
$
with
\begin{align*}
\boldsymbol{K}_{\alpha}(\boldsymbol{\eta})  &  =\mathrm{E}\left[
\boldsymbol{\Psi}_{\alpha}(Y,\boldsymbol{X}^{T}\boldsymbol{\beta}%
,\phi)\boldsymbol{\Psi}_{\alpha}^{T}(Y,\boldsymbol{X}^{T}\boldsymbol{\beta
},\phi)\right]  =%
{\displaystyle\int_{\mathcal{X}}}
\mathrm{E}\left[  \boldsymbol{\Psi}_{\alpha}(Y,\boldsymbol{x}^{T}%
\boldsymbol{\beta},\phi)\boldsymbol{\Psi}_{\alpha}^{T}(Y,\boldsymbol{x}%
^{T}\boldsymbol{\beta},\phi)\right]  dG(\boldsymbol{x}),\\
\boldsymbol{J}_{\alpha}(\boldsymbol{\eta})  &  =\mathrm{E}\left[
\frac{\partial\boldsymbol{\Psi}_{\alpha}(Y,\boldsymbol{X}^{T}\boldsymbol{\beta
},\phi)}{\partial\boldsymbol{\eta}^{T}}\right]  =%
{\displaystyle\int_{\mathcal{X}}}
\mathrm{E}\left[  \frac{\partial\boldsymbol{\Psi}_{\alpha}(Y,\boldsymbol{X}%
^{T}\boldsymbol{\beta},\phi)}{\partial\boldsymbol{\eta}^{T}}\right]
dG(\boldsymbol{x}).
\end{align*}
Here, $\mathcal{X}$ is the sample space of $\boldsymbol{X}$. 
After some algebra,  the  expressions  turn out to be
\[
\boldsymbol{K}_{\alpha}(\boldsymbol{\eta})=%
\begin{pmatrix}
{\int_{\mathcal{X}}}\left(  \gamma_{11,2\alpha}(\boldsymbol{x})-\gamma
_{1,\alpha}^{2}(\boldsymbol{x})\right)  \boldsymbol{x}\boldsymbol{x}%
^{T}dG(\boldsymbol{x}) & {\int_{\mathcal{X}}}\left(  \gamma_{12,2\alpha
}(\boldsymbol{x})-\gamma_{1,\alpha}(\boldsymbol{x})\gamma_{2,\alpha
}(\boldsymbol{x})\right)  \boldsymbol{x}dG(\boldsymbol{x})\\
{\int_{\mathcal{X}}}\left(  \gamma_{12,2\alpha}(\boldsymbol{x})-\gamma
_{1,\alpha}(\boldsymbol{x})\gamma_{2,\alpha}(\boldsymbol{x})\right)
\boldsymbol{x}^{T}dG(\boldsymbol{x}) & {\int_{\mathcal{X}}}\left(
\gamma_{22,2\alpha}(\boldsymbol{x})-\gamma_{2,\alpha}^{2}(\boldsymbol{x}%
)\right)  dG(\boldsymbol{x})
\end{pmatrix}
,
\]
and
\[
\boldsymbol{J}_{\alpha}(\boldsymbol{\eta})=%
\begin{pmatrix}
{\int_{\mathcal{X}}}\gamma_{11,\alpha}(\boldsymbol{x})\boldsymbol{x}%
\boldsymbol{x}^{T}dG(\boldsymbol{x}) & {\int_{\mathcal{X}}}\gamma_{12,\alpha
}(\boldsymbol{x})\boldsymbol{x}dG(\boldsymbol{x})\\
{\int_{\mathcal{X}}}\gamma_{12,\alpha}(\boldsymbol{x})\boldsymbol{x}^T%
dG(\boldsymbol{x}) & {\int_{\mathcal{X}}}\gamma_{22,\alpha}(\boldsymbol{x}%
)dG(\boldsymbol{x})
\end{pmatrix}
,
\]
where $\gamma_{j,\alpha}(\boldsymbol{x})$, $j=1,2$, is given by (\ref{gamma})
and
\[
\gamma_{jh,\alpha}(\boldsymbol{x})=\int K_{j}\left(  y,\boldsymbol{x}%
^{T}\boldsymbol{\beta},\phi\right)  K_{h}\left(  y,\boldsymbol{x}%
^{T}\boldsymbol{\beta},\phi\right)  f^{1+\alpha}\left(  y,\boldsymbol{x}%
^{T}\boldsymbol{\beta},\phi\right)  dy\text{, for }j,h=1,2.
\]
Notice that for the case where $\phi$ is known, we get $\boldsymbol{K}_{\alpha
}(\boldsymbol{\eta})={\int_{\mathcal{X}}}\left(  \gamma_{11,2\alpha
}(\boldsymbol{x})-\gamma_{1,\alpha}^{2}(\boldsymbol{x})\right)  \boldsymbol{x}%
\boldsymbol{x}^{T}dG(\boldsymbol{x})$ and $\boldsymbol{J}_{\alpha
}(\boldsymbol{\eta})={\int_{\mathcal{X}}}\gamma_{11,\alpha}(\boldsymbol{x}%
)\boldsymbol{x}\boldsymbol{x}^{T}dG(\boldsymbol{x})$.

\subsection{Robustness Properties: Influence Function}

\label{sec2.2}

Let us now study the robustness of the MDPDEs $\widehat{\boldsymbol{\eta}%
}_{\alpha}$ of $\boldsymbol{\eta}$ through the classical influence function of
Hampel et al.~(1986). Let us rewrite the MDPDE in terms of a statistical
functional $T_{\alpha}(H)$ at the true joint distribution $H(y,\boldsymbol{x}%
)$ of $(Y, \textbf{X})$ as the solution of (\ref{EQ:func_est}), whenever it exists.
Consider the contaminated distribution $H_{\epsilon}= (1-\epsilon)H +
\epsilon\wedge_{(y_{t},\boldsymbol{x}_{t})}$, where $\epsilon$ is the
contamination proportion and $\wedge_{(y_{t},\boldsymbol{x}_{t})}$ is the
degenerate distribution at the contamination point ${(y_{t},\boldsymbol{x}%
_{t})}$. Then, the influence function of $T_{\alpha}(H)$ is defined as
\begin{align}
\mathcal{IF}((y_{t},\boldsymbol{x}_{t}),T_{\alpha},H) = \left.  \frac{\partial
T_{\alpha}(H_{\epsilon})}{\partial\epsilon}\right\vert _{\epsilon=0} =
\lim\limits_{\epsilon\downarrow0} \frac{T_{\alpha}(H_{\epsilon}) - T_{\alpha
}(H)}{\epsilon},
\end{align}
which measures the bias in the estimator due to an infinitesimal
contamination in the data generating distribution.
Thus, a bounded influence function indicates local stability in the estimators
in terms of bounding the bias under contamination, which is referred to as (local) B-robustness.
Although there are several other important robustness measures as briefly pointed out later in Section 8,
throughout the present paper we will indicate such local B-robustness whenever we talk about robustness
of our MDPDE and the corresponding tests in terms of having a bounded influence function.

Note that, the MDPDE functional $T_{\alpha}(H)$ is clearly an M-estimator
functional and we can get its influence function directly from existing
M-estimator theory.
In particular, the influence function of the MDPDE functional $T_{\alpha}$ at
the model distribution $H_{0}(y,\boldsymbol{x})=f(y,\boldsymbol{x}%
^{T}\boldsymbol{\beta}, \phi)G(\boldsymbol{x})$ is given by
\begin{align}
\mathcal{IF}((y_{t},\boldsymbol{x}_{t}),T_{\alpha},H_{0})  &  = J_{\alpha
}(\boldsymbol{\eta})^{-1} \Psi_{\alpha}(y_{t},\boldsymbol{x}_{t}%
^{T}\boldsymbol{\beta},\phi)\nonumber\\
&  = J_{\alpha}(\boldsymbol{\eta})^{-1}\left(
\begin{array}
[c]{c}%
\left(  \gamma_{1,\alpha}(\boldsymbol{x}_{t})- K_{1}(y_{t},\boldsymbol{x}%
_{t}^{T}\boldsymbol{\beta},\phi)f^{\alpha}(y_{t},\boldsymbol{x}_{t}%
^{T}\boldsymbol{\beta},\phi)\right)  \boldsymbol{x}_{i}\\
\gamma_{2,\alpha}(\boldsymbol{x}_{t})- K_{2}(y_{t},\boldsymbol{x}_{t}%
^{T}\boldsymbol{\beta},\phi)f^{\alpha}(y_{t},\boldsymbol{x}_{t}^{T}%
\boldsymbol{\beta},\phi)
\end{array}
\right)
\end{align}
where $J_{\alpha}(\eta)$ is as defined in Section 2.1 and $(y_{t}%
,\boldsymbol{x}_{t})$ is the point of contamination.

Further, suppose $T_{\alpha}^{\beta}(H)$ and $T_{\alpha}^{\phi}(H)$ refer to
the MDPDE functionals corresponding to the parameters $\boldsymbol{\beta}$ and
$\phi$, respectively, so that $T_{\alpha}(H)= (T_{\alpha}^{\beta}(H)^{T},
T_{\alpha}^{\phi}(H))^{T}$. Note that the influence functions of the two
estimators $T_{\alpha}^{\beta}(H)$ and $T_{\alpha}^{\phi}(H)$ are not
independent in general linear models. However, whenever the matrix $J_{\alpha
}(\boldsymbol{\eta})$ is diagonal (as in the normal linear model) or $\phi$ is
known (as in the logistic and Poisson regression models), the influence function of
the MDPDE of $\boldsymbol{\beta}$ can be written simply as
\begin{align}
\mathcal{IF}((y_{t},\boldsymbol{x}_{t}),T_{\alpha}^{\beta},H_{0})  &  =
\left(  {\int_{\mathcal{X}}}\gamma_{11,\alpha}(\boldsymbol{x} )\boldsymbol{x}%
\boldsymbol{x}^{T}dG(\boldsymbol{x})\right)  ^{-1} \left(  \gamma_{1,\alpha
}(\boldsymbol{x}_{t})- K_{1}(y_{t},\boldsymbol{x}_{t}^{T}\boldsymbol{\beta
},\phi)f^{\alpha}(y_{t},\boldsymbol{x}_{t}^{T}\boldsymbol{\beta},\phi)\right)
\boldsymbol{x}_{i}.
\end{align}

From the above form it is easily observed that this  influence function  is
bounded in the contamination point $(y_{t},\boldsymbol{x}_{t})$ for all $\alpha>0$
and unbounded at $\alpha=0$ for most standard GLMs. For example, under the
 normal linear regression model, the influence function of the MDPDE of
$\beta$ depends on the contamination point through the quantity $(y_{t}%
-\boldsymbol{x}_{t}^{T}\boldsymbol{\beta})\boldsymbol{x}_{t}e^{-\frac
{\alpha(y_{t}-x_{t}^{T}\boldsymbol{\beta})^{2}}{2\phi^{2}}}$ and hence it is
bounded for all $\alpha>0$ implying the robustness of the corresponding MDPDEs. 
In this paper we will present the general theory of the random design model, 
and illustrate the methodology in detail for the Poisson regression problem.

\section{Wald-type Test Statistics for General Composite Hypothesis
\label{sec3}}

The asymptotic distribution of $\widehat{\boldsymbol{\eta}}_{\alpha}$, given
in Section \ref{sec2.1}, will be useful in order to define a family of
Wald-type test statistics for testing the null hypothesis
\begin{equation}
H_{0}:\boldsymbol{m}\left(  \boldsymbol{\eta}\right)  =\boldsymbol{0}\text{
versus }H_{1}:\boldsymbol{m}\left(  \boldsymbol{\eta}\right)  \neq
\boldsymbol{0}, \label{EQ:hyp}%
\end{equation}
with $\boldsymbol{m}:\mathbb{R}^{k+1}\rightarrow\mathbb{R}^{r},$ $r<k+1$. Thus the null hypothesis imposes $r$ restrictions on the parameter $\eta$.  We
shall assume that $\boldsymbol{M}\left(  \boldsymbol{\eta}\right)
=\frac{\partial}{\partial\boldsymbol{\eta}}\boldsymbol{m}^{T}\left(
\boldsymbol{\eta}\right)  $ is a continuous full (column) rank matrix with $k+1$ rows
and $r$ columns.

If $\phi$ is known or we are only interested in testing some hypothesis on
$\boldsymbol{\beta}$, say, $\boldsymbol{m}^{\ast}\left( \boldsymbol{\beta
}\right) =\boldsymbol{0}$, we shall consider $\boldsymbol{m}\left(
\boldsymbol{\eta}\right) =\boldsymbol{m}^{\ast}\left( \boldsymbol{\beta
}\right) $ and then $\boldsymbol{M}\left( \boldsymbol{\eta}\right)  = \left(
\frac{\partial}{\partial\boldsymbol{\beta}^{T}}\boldsymbol{m}^{\ast}\left(
\boldsymbol{\beta}\right) , \ \boldsymbol{0}\right) ^{T}$ if $\phi$ is
unknown, and $\boldsymbol{M}\left( \boldsymbol{\eta}\right)  = \frac{\partial
}{\partial\boldsymbol{\beta}^T}\boldsymbol{m}^{\ast}\left( \boldsymbol{\beta
}\right) $ if $\phi$ is known. The most commonly used hypothesis under this
set-up is the general linear hypothesis on $\boldsymbol{\beta}$ given by
$\boldsymbol{L}\boldsymbol{\beta}=\boldsymbol{l}_{0}$ for some $r\times k$
matrix $\boldsymbol{L}$ and $r$-vector $\boldsymbol{l}_{0}$. Here we 
have $\boldsymbol{m}\left( \boldsymbol{\eta}\right) =\boldsymbol{m}^{\ast
}\left( \boldsymbol{\beta}\right)  =\boldsymbol{L}\boldsymbol{\beta
}-\boldsymbol{l}_{0}$ and $\boldsymbol{M}\left( \boldsymbol{\eta}\right)  =
\left( \boldsymbol{L}, \ 0\right) ^{T}$ or $\boldsymbol{L}^{T}$ for $\phi$
unknown or known respectively. On the other hand, if we are interested in
testing $H_{0}:\phi=\phi_{0}$, we shall consider $\boldsymbol{m}\left(
\boldsymbol{\eta}\right)  =\phi-\phi_{0}.$ In this case $\boldsymbol{M}\left(
\boldsymbol{\eta}\right)  =\left(  \boldsymbol{0}_{1\times k}^{T},1\right)
^{T}$.

\begin{definition}
Let $\widehat{\boldsymbol{\eta}}_{\alpha}$ be the MDPDE for $\boldsymbol{\eta
}$. The family of Wald-type test statistics for testing the null hypothesis
given in (\ref{EQ:hyp}) is given by
\begin{equation}
W_{n}(\widehat{\boldsymbol{\eta}}_{\alpha})=n\boldsymbol{m}\left(
\widehat{\boldsymbol{\eta}}_{\alpha}\right)  ^{T}\left[  \boldsymbol{M}%
(\widehat{\boldsymbol{\eta}}_{\alpha})^{T}\boldsymbol{\Sigma}_{\alpha
}(\widehat{\boldsymbol{\eta}}_{\alpha})\boldsymbol{M}%
(\widehat{\boldsymbol{\eta}}_{\alpha})\right]  ^{-1}\boldsymbol{m}\left(
\widehat{\boldsymbol{\eta}}_{\alpha}\right)  . \label{EQ:TS}%
\end{equation}

\end{definition}

\begin{theorem}
The asymptotic distribution of the Wald-type test statistic, $W_{n}%
(\widehat{\boldsymbol{\eta}}_{\alpha})$, defined in (\ref{EQ:TS}), under the
null hypothesis given in (\ref{EQ:hyp}), is a chi-square distribution with $r$
degrees of freedom. \label{THM:asymp_null}
\end{theorem}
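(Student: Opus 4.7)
The plan is to reduce the statement to the standard quadratic-form-of-a-normal argument via a first order Taylor expansion of $\boldsymbol{m}$ around the true parameter, combined with the asymptotic normality of the MDPDE established in Section \ref{sec2.1}.

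First I would let $\boldsymbol{\eta}_0$ denote the true parameter value under $H_0$, so that $\boldsymbol{m}(\boldsymbol{\eta}_0)=\boldsymbol{0}$. A Taylor expansion of $\boldsymbol{m}$ about $\boldsymbol{\eta}_0$, evaluated at $\widehat{\boldsymbol{\eta}}_\alpha$, yields
\[
\sqrt{n}\,\boldsymbol{m}(\widehat{\boldsymbol{\eta}}_\alpha)
= \boldsymbol{M}(\boldsymbol{\eta}_0)^T \sqrt{n}(\widehat{\boldsymbol{\eta}}_\alpha - \boldsymbol{\eta}_0) + o_P(1),
\]
using the consistency of $\widehat{\boldsymbol{\eta}}_\alpha$ (which is a by-product of the asymptotic normality in Section \ref{sec2.1}) together with continuity of $\boldsymbol{M}$. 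Combining this with $\sqrt{n}(\widehat{\boldsymbol{\eta}}_\alpha - \boldsymbol{\eta}_0) \xrightarrow{\mathcal{L}} \mathcal{N}(\boldsymbol{0},\boldsymbol{\Sigma}_\alpha(\boldsymbol{\eta}_0))$ gives, by the delta method,
\[
\sqrt{n}\,\boldsymbol{m}(\widehat{\boldsymbol{\eta}}_\alpha) \xrightarrow{\mathcal{L}} \mathcal{N}\bigl(\boldsymbol{0},\,\boldsymbol{M}(\boldsymbol{\eta}_0)^T \boldsymbol{\Sigma}_\alpha(\boldsymbol{\eta}_0)\boldsymbol{M}(\boldsymbol{\eta}_0)\bigr).
\]

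Next I would observe that $\boldsymbol{M}(\boldsymbol{\eta}_0)$ is, by assumption, a $(k+1)\times r$ matrix of full column rank $r$, and that $\boldsymbol{\Sigma}_\alpha(\boldsymbol{\eta}_0)=\boldsymbol{J}_\alpha^{-1}\boldsymbol{K}_\alpha\boldsymbol{J}_\alpha^{-1}$ is positive definite (since $\boldsymbol{J}_\alpha(\boldsymbol{\eta}_0)$ and $\boldsymbol{K}_\alpha(\boldsymbol{\eta}_0)$ are positive definite by the assumptions used to derive the asymptotic distribution of the MDPDE). This ensures that the $r\times r$ matrix $\boldsymbol{M}(\boldsymbol{\eta}_0)^T\boldsymbol{\Sigma}_\alpha(\boldsymbol{\eta}_0)\boldsymbol{M}(\boldsymbol{\eta}_0)$ is nonsingular, so the standard fact that $\boldsymbol{Z}^T\boldsymbol{V}^{-1}\boldsymbol{Z}\sim\chi^2_r$ for $\boldsymbol{Z}\sim\mathcal{N}(\boldsymbol{0},\boldsymbol{V})$ with $\boldsymbol{V}\succ 0$ of dimension $r$ yields the claimed limiting $\chi^2_r$ distribution when the asymptotic covariance is used in the quadratic form.

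Finally, to finish the proof I would replace $\boldsymbol{M}(\boldsymbol{\eta}_0)^T\boldsymbol{\Sigma}_\alpha(\boldsymbol{\eta}_0)\boldsymbol{M}(\boldsymbol{\eta}_0)$ by its plug-in estimator $\boldsymbol{M}(\widehat{\boldsymbol{\eta}}_\alpha)^T\boldsymbol{\Sigma}_\alpha(\widehat{\boldsymbol{\eta}}_\alpha)\boldsymbol{M}(\widehat{\boldsymbol{\eta}}_\alpha)$ appearing in $W_n(\widehat{\boldsymbol{\eta}}_\alpha)$; this is done via Slutsky's theorem, using consistency of $\widehat{\boldsymbol{\eta}}_\alpha$ together with continuity of $\boldsymbol{M}(\cdot)$ and of $\boldsymbol{\eta}\mapsto\boldsymbol{\Sigma}_\alpha(\boldsymbol{\eta})$, the latter following from continuity under the integral sign of the integrals defining $\boldsymbol{J}_\alpha$ and $\boldsymbol{K}_\alpha$ through the smooth functions $\gamma_{j,\alpha}(\boldsymbol{x})$ and $\gamma_{jh,\alpha}(\boldsymbol{x})$. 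The main obstacle I anticipate is precisely this last step: verifying that the covariance functional $\boldsymbol{\Sigma}_\alpha(\boldsymbol{\eta})$ is continuous at $\boldsymbol{\eta}_0$ (so that $\boldsymbol{\Sigma}_\alpha(\widehat{\boldsymbol{\eta}}_\alpha)\xrightarrow{P}\boldsymbol{\Sigma}_\alpha(\boldsymbol{\eta}_0)$ and remains nonsingular in a neighborhood) requires mild dominated-convergence conditions on the GLM family, which are standard but need to be invoked; everything else is a routine application of the delta method and a continuous mapping argument.
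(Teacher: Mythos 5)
Your proposal is correct and follows essentially the same route as the paper's own proof: a first-order Taylor expansion of $\boldsymbol{m}$ at $\boldsymbol{\eta}_{0}$ combined with the asymptotic normality of $\widehat{\boldsymbol{\eta}}_{\alpha}$ to get $\sqrt{n}\,\boldsymbol{m}(\widehat{\boldsymbol{\eta}}_{\alpha})\rightarrow\mathcal{N}\bigl(\boldsymbol{0},\boldsymbol{M}(\boldsymbol{\eta}_{0})^{T}\boldsymbol{\Sigma}_{\alpha}(\boldsymbol{\eta}_{0})\boldsymbol{M}(\boldsymbol{\eta}_{0})\bigr)$, followed by the standard quadratic-form argument. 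You actually supply more detail than the paper does (the full-rank/nonsingularity check and the Slutsky step replacing $\boldsymbol{\Sigma}_{\alpha}(\boldsymbol{\eta}_{0})$ by $\boldsymbol{\Sigma}_{\alpha}(\widehat{\boldsymbol{\eta}}_{\alpha})$, which the paper leaves implicit).
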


\begin{proof}
We know that $\sqrt{n}(\widehat{\boldsymbol{\eta}}_{\alpha}-\boldsymbol{\eta
}_{0})\underset{n\rightarrow\infty}{\overset{\mathcal{L}}{\longrightarrow}%
}\mathcal{N}\left(  \boldsymbol{0}\text{,}\boldsymbol{\Sigma}_{\alpha}\left(
\boldsymbol{\eta}_{0}\right)  \right)  $ and $\boldsymbol{m}\left(
\widehat{\boldsymbol{\eta}}_{\alpha}\right)  =\boldsymbol{M}(\boldsymbol{\eta
}_{0})^{T}(\widehat{\boldsymbol{\eta}}_{\alpha}-\boldsymbol{\eta}_{0}%
)+o_{p}(n^{-1/2})$ because $\boldsymbol{m}\left(  \boldsymbol{\eta}%
_{0}\right)  =\boldsymbol{0}$. Therefore
\[
\sqrt{n} \boldsymbol{m}\left(  \widehat{\boldsymbol{\eta}}_{\alpha}\right)
\underset{n\rightarrow\infty}{\overset{\mathcal{L}}{\longrightarrow}%
}\mathcal{N}\left(  \boldsymbol{0},\boldsymbol{M}(\boldsymbol{\eta}_{0}%
)^{T}\boldsymbol{\boldsymbol{\Sigma}_{\alpha}}(\boldsymbol{\boldsymbol{\eta
}_{0}})\boldsymbol{M}(\boldsymbol{\eta}_{0})\right)  .
\]
Then the asymptotic distribution of $W_{n}(\widehat{\boldsymbol{\eta}}%
_{\alpha})$ is a chi-square distribution with $r$ degrees of freedom.
\end{proof}

Based on the previous theorem the null hypothesis given in (\ref{EQ:hyp}) will
be rejected at $\alpha_0$ if we have 
\begin{equation}
W_{n}(\widehat{\boldsymbol{\eta}}_{\alpha})>\chi_{r,\alpha_0}^{2}. \label{3.111}%
\end{equation}
Now we consider $\boldsymbol{\eta}^{\ast}$ $\in\Theta$ such that
$\boldsymbol{m}\left(  \boldsymbol{\eta}^{\ast}\right)  \neq\boldsymbol{0}$,
i.e., $\boldsymbol{\eta}^{\ast}$ does not belong to the null hypothesis. We
denote%
\[
q_{\boldsymbol{\eta}_{1}}(\boldsymbol{\eta}_{2})=m\left(  \boldsymbol{\eta
}_{1}\right)  ^{T}\left(  \boldsymbol{\boldsymbol{M}^{T}}\left(
\boldsymbol{\eta}_{2}\right)  \boldsymbol{\boldsymbol{\Sigma}_{\alpha}%
}(\boldsymbol{\boldsymbol{\eta}_{2}})\boldsymbol{M}\left(  \boldsymbol{\eta
}_{2}\right)  \right)  ^{-1}m\left(  \boldsymbol{\eta}_{1}\right)
\]
and, in the following, we provide  an approximation to the power function for the Wald-type tests
given in (\ref{3.111}).

\begin{theorem}
Let $\boldsymbol{\eta}^{\ast}\in\Theta$ be the true value of the parameter
such that $\boldsymbol{m}\left(  \boldsymbol{\eta}^{\ast}\right)
\neq\boldsymbol{0}$ and $\widehat{\boldsymbol{\eta}}_{\alpha}%
\underset{n\rightarrow\infty}{\overset{P}{\longrightarrow}}\boldsymbol{\eta
}^{\ast}$. The power function of the tests given in (\ref{3.111}), in
$\boldsymbol{\eta}^{\ast}$, is given by
\begin{equation}
\pi\left(  \boldsymbol{\eta}^{\ast}\right)  =1-\Phi_{n}\left(  \frac{1}%
{\sigma\left(  \boldsymbol{\eta}^{\ast}\right)  }\left(  \frac{\chi_{r,\alpha_0
}^{2}}{\sqrt{n}}-\sqrt{n}q_{\boldsymbol{\eta}^{\ast}}(\boldsymbol{\eta}^{\ast
})\right)  \right)  \label{3.112}%
\end{equation}
where $\Phi_{n}\left(  x\right)  $ almost surely converges to the standard normal
distribution $\Phi\left(  x\right)  $ and $\sigma\left(  \boldsymbol{\eta
}^{\ast}\right)  $ is given by
\[
\sigma^{2}\left(  \boldsymbol{\eta}^{\ast}\right)  =\left.  \frac{\partial
q_{\boldsymbol{\eta}}(\boldsymbol{\eta})}{\partial\boldsymbol{\eta}^{T}%
}\right\vert _{\boldsymbol{\eta}=\boldsymbol{\eta}^{\ast}}\boldsymbol{\Sigma
}_{\alpha}\left(  \boldsymbol{\eta}_{0}\right)  \left.  \frac{\partial
q_{\boldsymbol{\eta}}(\boldsymbol{\eta})}{\partial\boldsymbol{\eta}%
}\right\vert _{\boldsymbol{\eta}=\boldsymbol{\eta}^{\ast}}.
\]
\label{THM:power_approx}
\end{theorem}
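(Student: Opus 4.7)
The plan is to observe that the test statistic can be written as $W_n(\widehat{\boldsymbol{\eta}}_{\alpha})=n\,g(\widehat{\boldsymbol{\eta}}_{\alpha})$, where
\[
g(\boldsymbol{\eta}) := q_{\boldsymbol{\eta}}(\boldsymbol{\eta}) = \boldsymbol{m}(\boldsymbol{\eta})^{T}\bigl[\boldsymbol{M}^{T}(\boldsymbol{\eta})\,\boldsymbol{\Sigma}_{\alpha}(\boldsymbol{\eta})\,\boldsymbol{M}(\boldsymbol{\eta})\bigr]^{-1}\boldsymbol{m}(\boldsymbol{\eta}),
\]
and then to approximate the probability $P(W_{n}>\chi^{2}_{r,\alpha})$ under the true parameter $\boldsymbol{\eta}^{\ast}$ via a delta-method argument. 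Because $\boldsymbol{m}(\boldsymbol{\eta}^{\ast})\neq\boldsymbol{0}$, we have $g(\boldsymbol{\eta}^{\ast})>0$, so $W_{n}\to\infty$ at rate $n$; the relevant Gaussian centring is therefore $g(\boldsymbol{\eta}^{\ast})$ rather than $0$, which is what separates this situation from the null behaviour studied in Theorem~\ref{THM:asymp_null}.

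Concretely, I would carry out three steps. First, I would invoke the MDPDE asymptotic normality of Section~\ref{sec2.1}, now applied at the true value $\boldsymbol{\eta}^{\ast}$ under which the data are generated, to obtain $\sqrt{n}(\widehat{\boldsymbol{\eta}}_{\alpha}-\boldsymbol{\eta}^{\ast})\underset{n\to\infty}{\overset{\mathcal{L}}{\longrightarrow}}\mathcal{N}(\boldsymbol{0},\boldsymbol{\Sigma}_{\alpha}(\boldsymbol{\eta}^{\ast}))$. Second, I would expand $g$ to first order around $\boldsymbol{\eta}^{\ast}$,
\[
g(\widehat{\boldsymbol{\eta}}_{\alpha}) = g(\boldsymbol{\eta}^{\ast}) + \left.\frac{\partial q_{\boldsymbol{\eta}}(\boldsymbol{\eta})}{\partial\boldsymbol{\eta}^{T}}\right|_{\boldsymbol{\eta}=\boldsymbol{\eta}^{\ast}}(\widehat{\boldsymbol{\eta}}_{\alpha}-\boldsymbol{\eta}^{\ast})+o_{p}(n^{-1/2}),
\]
which is legitimate because $\boldsymbol{m}$ is smooth and $\boldsymbol{M}^{T}\boldsymbol{\Sigma}_{\alpha}\boldsymbol{M}$ is invertible in a neighbourhood of $\boldsymbol{\eta}^{\ast}$. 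The delta method then yields $\sqrt{n}\bigl(g(\widehat{\boldsymbol{\eta}}_{\alpha})-g(\boldsymbol{\eta}^{\ast})\bigr)\underset{n\to\infty}{\overset{\mathcal{L}}{\longrightarrow}}\mathcal{N}(0,\sigma^{2}(\boldsymbol{\eta}^{\ast}))$ with $\sigma^{2}(\boldsymbol{\eta}^{\ast})$ exactly as stated. Third, I would rewrite
\[
\pi(\boldsymbol{\eta}^{\ast})=P\bigl(n\,g(\widehat{\boldsymbol{\eta}}_{\alpha})>\chi^{2}_{r,\alpha}\bigr)=P\!\left(\frac{\sqrt{n}\bigl(g(\widehat{\boldsymbol{\eta}}_{\alpha})-g(\boldsymbol{\eta}^{\ast})\bigr)}{\sigma(\boldsymbol{\eta}^{\ast})}>\frac{1}{\sigma(\boldsymbol{\eta}^{\ast})}\left(\frac{\chi^{2}_{r,\alpha}}{\sqrt{n}}-\sqrt{n}\,q_{\boldsymbol{\eta}^{\ast}}(\boldsymbol{\eta}^{\ast})\right)\right)
\]
and identify $\phi_{n}$ with the distribution function of the standardised quantity on the left. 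The claimed identity (\ref{3.112}) then drops out directly.

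The main obstacle is twofold. First, one must verify that the MDPDE expansion derived in Section~\ref{sec2.1}, which was stated at the model parameter under $H_{0}$, transports to an arbitrary true $\boldsymbol{\eta}^{\ast}$ satisfying the population equation $\mathrm{E}[\boldsymbol{\Psi}_{\alpha}(Y,\boldsymbol{X}^{T}\boldsymbol{\beta}^{\ast},\phi^{\ast})]=\boldsymbol{0}$; this is precisely what the hypothesis $\widehat{\boldsymbol{\eta}}_{\alpha}\overset{P}{\to}\boldsymbol{\eta}^{\ast}$ together with the usual M-estimator regularity conditions buys us, but it is the step that has to be spelled out carefully. Second, the conclusion is formulated through a sequence of distribution functions $\phi_{n}$ that converges \emph{uniformly} to the standard normal, not merely pointwise; this strengthening from the delta-method CLT is supplied by P\'olya's theorem, since the limiting distribution function $\Phi$ is continuous. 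Once these two ingredients are in place, the rest of the argument is a direct computation.
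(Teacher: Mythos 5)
Your proposal is correct and follows essentially the same route as the paper: rewrite $\{W_n>\chi^2_{r,\alpha}\}$ in terms of $\sqrt{n}\bigl(q_{\widehat{\boldsymbol{\eta}}_{\alpha}}(\widehat{\boldsymbol{\eta}}_{\alpha})-q_{\boldsymbol{\eta}^{\ast}}(\boldsymbol{\eta}^{\ast})\bigr)$, apply the MDPDE asymptotic normality at $\boldsymbol{\eta}^{\ast}$ together with a first-order Taylor expansion (delta method) to get the $\mathcal{N}(0,\sigma^{2}(\boldsymbol{\eta}^{\ast}))$ limit, and read off (\ref{3.112}). The only cosmetic difference is that you expand the composite map $\boldsymbol{\eta}\mapsto q_{\boldsymbol{\eta}}(\boldsymbol{\eta})$ in one step, whereas the paper first argues that $q_{\widehat{\boldsymbol{\eta}}_{\alpha}}(\widehat{\boldsymbol{\eta}}_{\alpha})$ and $q_{\widehat{\boldsymbol{\eta}}_{\alpha}}(\boldsymbol{\eta}^{\ast})$ share the same asymptotic distribution and then expands only in the subscript argument; your explicit appeal to P\'olya's theorem for the uniformity of $\phi_{n}\to\phi$ is a welcome detail the paper leaves implicit.
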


\begin{proof}
We have
\begin{align*}
\pi\left(  \boldsymbol{\eta}^{\ast}\right)   &  =\Pr\left(  W_{n}%
(\widehat{\boldsymbol{\eta}}_{\alpha})>\chi_{r,\alpha_0}^{2}\right)  =\Pr\left(
n\left(  q_{\widehat{\boldsymbol{\eta}}_{\alpha}}(\widehat{\boldsymbol{\eta}%
}_{\alpha})-q_{\boldsymbol{\eta}^{\ast}}(\boldsymbol{\eta}^{\ast})\right)
>\chi_{r,\alpha_0}^{2}-nq_{\boldsymbol{\eta}^{\ast}}(\boldsymbol{\eta}^{\ast
})\right) \\
&  =\Pr\left(  \sqrt{n}\left(  q_{\widehat{\boldsymbol{\eta}}_{\alpha}%
}(\widehat{\boldsymbol{\eta}}_{\alpha})-q_{\boldsymbol{\eta}^{\ast}%
}(\boldsymbol{\eta}^{\ast})\right)  >\frac{\chi_{r,\alpha_0}^{2}}{\sqrt{n}%
}-\sqrt{n}q_{\boldsymbol{\eta}^{\ast}}(\boldsymbol{\eta}^{\ast})\right)  .
\end{align*}
Now we are going to get the asymptotic distribution of the random variable
$\sqrt{n}\left(  q_{\widehat{\boldsymbol{\eta}}_{\alpha}}%
(\widehat{\boldsymbol{\eta}}_{\alpha})-q_{\boldsymbol{\eta}^{\ast}%
}(\boldsymbol{\eta}^{\ast})\right)  $. Since $\widehat{\boldsymbol{\eta}%
}_{\alpha}\underset{n\rightarrow\infty}{\overset{P}{\longrightarrow}%
}\boldsymbol{\eta}^{\ast}$, it is clear that $q_{\widehat{\boldsymbol{\eta}%
}_{\alpha}}(\widehat{\boldsymbol{\eta}}_{\alpha})$ and
$q_{\widehat{\boldsymbol{\eta}}_{\alpha}}(\boldsymbol{\eta}^{\ast})$ have the
same asymptotic distribution. The first order Taylor expansion of
$q_{{\boldsymbol{\eta}}}(\boldsymbol{\eta}^{\ast})$ around $\boldsymbol{\eta}=\boldsymbol{\eta}^{\ast}$,
evaluated at $\boldsymbol{\eta}=\widehat{\boldsymbol{\eta}}_{\alpha}$,  
gives%
\[
q_{\widehat{\boldsymbol{\eta}}_{\alpha}}(\boldsymbol{\eta}^{\ast
})-q_{\boldsymbol{\eta}^{\ast}}(\boldsymbol{\eta}^{\ast})=\left.
\frac{\partial q_{\boldsymbol{\eta}}(\boldsymbol{\eta}^{\ast})}{\partial
\boldsymbol{\eta}^{T}}\right\vert _{\boldsymbol{\eta}=\boldsymbol{\eta}^{\ast
}}(\widehat{\boldsymbol{\eta}}_{\alpha}-\boldsymbol{\eta}^{\ast})+o_{p}\left(
\left\Vert \widehat{\boldsymbol{\eta}}_{\alpha}-\boldsymbol{\eta}^{\ast
}\right\Vert \right)  .
\]
Therefore, it holds%
\[
\sqrt{n}\left(  q_{\widehat{\boldsymbol{\eta}}_{\alpha}}%
(\widehat{\boldsymbol{\eta}}_{\alpha})-q_{\boldsymbol{\eta}^{\ast}%
}(\boldsymbol{\eta}^{\ast})\right)  \underset{n\rightarrow\infty
}{\overset{\mathcal{L}}{\longrightarrow}}\mathcal{N}\left(  0,\sigma
^{2}\left(  \boldsymbol{\eta}^{\ast}\right)  \right)  ,
\]
and the result follows.
\end{proof}

\begin{remark}
Based on the previous theorem, we can obtain the sample size $n$ necessary to get
a specific power $\pi\left(  \boldsymbol{\eta}^{\ast}\right)  =\pi_{0}$. From
(\ref{3.112}), we must solve the equation
\[
1-\pi_{0}=\Phi\left(  \frac{1}{\sigma\left(  \boldsymbol{\eta}^{\ast}\right)
}\left(  \frac{\chi_{r,\alpha_0}^{2}}{\sqrt{n}}-\sqrt{n}q_{\boldsymbol{\eta
}^{\ast}}(\boldsymbol{\eta}^{\ast})\right)  \right)
\]
and we get that $n=\left[  n^{\ast}\right]  +1$ with
\[
n^{\ast}=\frac{A+B+\sqrt{A(A+2B)}}{2q_{\boldsymbol{\eta}^{\ast}}%
^{2}(\boldsymbol{\eta}^{\ast})},
\]
where
\[
A=\sigma^{2}\left(  \boldsymbol{\eta}^{\ast}\right)  \left(  \Phi^{-1}\left(
1-\pi_{0}\right)  \right)  ^{2}\text{ and }B=2\chi_{r,\alpha_0}^{2}%
q_{\boldsymbol{\eta}^{\ast}}(\boldsymbol{\eta}^{\ast}).
\]

\end{remark}

\begin{corollary}
Under the assumptions of Theorem \ref{THM:power_approx}, we have $\pi\left(
\boldsymbol{\eta}^{\ast}\right)  \rightarrow1$ as $n\rightarrow\infty.$ Thus,
our proposed Wald-type tests are consistent at any fixed alternative.
\label{CORR:consistency}
\end{corollary}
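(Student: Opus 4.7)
The plan is to read off the conclusion from the explicit power approximation furnished by Theorem \ref{THM:power_approx} and show that its argument drives $\phi_n$ to zero. Starting from
\[
\pi(\boldsymbol{\eta}^{\ast}) = 1 - \phi_n\!\left(\frac{1}{\sigma(\boldsymbol{\eta}^{\ast})}\left(\frac{\chi_{r,\alpha}^{2}}{\sqrt{n}} - \sqrt{n}\,q_{\boldsymbol{\eta}^{\ast}}(\boldsymbol{\eta}^{\ast})\right)\right),
\]
I would first establish the strict positivity $q_{\boldsymbol{\eta}^{\ast}}(\boldsymbol{\eta}^{\ast}) > 0$. Because $\boldsymbol{\Sigma}_{\alpha}(\boldsymbol{\eta}^{\ast}) = \boldsymbol{J}_{\alpha}^{-1}\boldsymbol{K}_{\alpha}\boldsymbol{J}_{\alpha}^{-1}$ is positive definite and $\boldsymbol{M}(\boldsymbol{\eta}^{\ast})$ is full column rank by assumption, the $r \times r$ matrix $\boldsymbol{M}^{T}\boldsymbol{\Sigma}_{\alpha}\boldsymbol{M}$ is positive definite, so its inverse is also positive definite. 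Combined with $\boldsymbol{m}(\boldsymbol{\eta}^{\ast}) \neq \boldsymbol{0}$ under the fixed alternative, this yields $q_{\boldsymbol{\eta}^{\ast}}(\boldsymbol{\eta}^{\ast}) > 0$ and likewise $\sigma(\boldsymbol{\eta}^{\ast}) > 0$ (assuming the standard regularity that $\partial q_{\boldsymbol{\eta}}/\partial \boldsymbol{\eta}$ does not vanish at $\boldsymbol{\eta}^{\ast}$, otherwise the result still follows since the argument of $\phi_n$ degenerates to $-\infty$ even more trivially).

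Next I would analyse the argument of $\phi_n$ as $n \to \infty$. The term $\chi_{r,\alpha}^{2}/\sqrt{n}$ is a fixed critical value divided by $\sqrt{n}$ and tends to $0$, while $\sqrt{n}\,q_{\boldsymbol{\eta}^{\ast}}(\boldsymbol{\eta}^{\ast})$ diverges to $+\infty$ by the first step. Hence the whole bracket tends to $-\infty$, and dividing by the positive constant $\sigma(\boldsymbol{\eta}^{\ast})$ preserves this divergence. Denote the argument by $t_n \to -\infty$.

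The remaining step is to pass from $t_n \to -\infty$ to $\phi_n(t_n) \to 0$. This is where uniform convergence of $\phi_n$ to the standard normal cdf $\phi$ is essential: pointwise convergence alone would not suffice to handle a moving argument. Write $|\phi_n(t_n)| \leq |\phi_n(t_n) - \phi(t_n)| + |\phi(t_n)|$; the first term vanishes by the uniform convergence hypothesis of Theorem \ref{THM:power_approx}, and the second vanishes because $\phi(t_n) \to 0$ as $t_n \to -\infty$. Substituting back gives $\pi(\boldsymbol{\eta}^{\ast}) \to 1 - 0 = 1$, which is precisely the claim.

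The only genuine subtlety I anticipate is the final uniform-convergence step, since the approximation $\phi_n$ from Theorem \ref{THM:power_approx} is evaluated at an argument that itself depends on $n$ and drifts off to $-\infty$; the rest of the argument is an arithmetic observation that the $\chi_{r,\alpha}^{2}/\sqrt{n}$ term is negligible against the $\sqrt{n}\,q_{\boldsymbol{\eta}^{\ast}}(\boldsymbol{\eta}^{\ast})$ term under any fixed alternative.
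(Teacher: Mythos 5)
Your argument is correct and is exactly the route the paper intends: the corollary is stated as an immediate consequence of the power approximation in Theorem \ref{THM:power_approx}, with $q_{\boldsymbol{\eta}^{\ast}}(\boldsymbol{\eta}^{\ast})>0$ forcing the argument of $\phi_{n}$ to $-\infty$ and the uniform convergence of $\phi_{n}$ to $\phi$ justifying the passage to the limit along the moving argument. Your explicit treatment of the uniform-convergence step and of the positive definiteness of $\boldsymbol{M}^{T}\boldsymbol{\Sigma}_{\alpha}\boldsymbol{M}$ fills in details the paper leaves implicit, but the approach is the same.
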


We may also find an approximation of the power of the Wald-type tests given in (\ref{EQ:TS}) at an alternative close to the null
hypothesis. Let $\boldsymbol{\eta}_{n}\in\Theta-\Theta_{0}$ be a given
alternative and let $\boldsymbol{\eta}_{0}$ be the element in boundary of $\Theta_{0}$
closest to $\boldsymbol{\eta}_{n}$ in the Euclidean distance sense. One
possibility to introduce contiguous alternative hypotheses in this set up  is to consider a
fixed vector $\boldsymbol{d}$ and to permit $\boldsymbol{\eta}_{n}$
to move towards $\boldsymbol{\eta}_{0}$ with increasing $n$ as 
\begin{equation}
H_{1,n}:\boldsymbol{\eta}_{n}=\boldsymbol{\eta}_{0}+n^{-1/2}\boldsymbol{d}.
\label{a.15}%
\end{equation}
A second approach could be to relax the condition $\boldsymbol{m}\left(
\boldsymbol{\eta}\right)  =\boldsymbol{0}$ defining $\Theta_{0}.$ Let
$\boldsymbol{d}^{\ast}\in\mathbb{R}^{r}$ and consider the  sequence
$\left\{  \boldsymbol{\eta}_{n}\right\}  $ of parameters moving towards
$\boldsymbol{\eta}_{0}$ according to
\begin{equation}
H_{1,n}^{\ast}:\boldsymbol{m}\left(  \boldsymbol{\eta}_{n}\right)
=n^{-1/2}\boldsymbol{d}^{\ast}. \label{a.16}%
\end{equation}
Note that a Taylor series expansion of $\boldsymbol{m}\left(  \boldsymbol{\eta
}_{n}\right)  $ around $\boldsymbol{\eta}_{0}$ yields
\begin{equation}
\boldsymbol{m}\left(  \boldsymbol{\eta}_{n}\right)  =\boldsymbol{m}%
(\boldsymbol{\eta}_{0})+\boldsymbol{M}^{T}(\boldsymbol{\eta}_{0})\left(
\boldsymbol{\eta}_{n}-\boldsymbol{\eta}_{0}\right)  +o\left(  \left\Vert
\boldsymbol{\eta}_{n}-\boldsymbol{\eta}_{0}\right\Vert \right)  . \label{a.17}%
\end{equation}
By substituting $\boldsymbol{\eta}_{n}=\boldsymbol{\eta}_{0}+n^{-1/2}%
\boldsymbol{d}$ in (\ref{a.17}) and taking into account that\textbf{\ }%
$\boldsymbol{m}(\boldsymbol{\eta}_{0})=\boldsymbol{0}$, we get
\[
\boldsymbol{m}\left(  \boldsymbol{\eta}_{n}\right)  =n^{-1/2}\boldsymbol{M}%
^{T}(\boldsymbol{\eta}_{0})\boldsymbol{d}+o\left(  \left\Vert \boldsymbol{\eta
}_{n}-\boldsymbol{\eta}_{0}\right\Vert \right)  ,
\]
so that the equivalence of the two approaches  in the limit is obtained for $\boldsymbol{d}^{\ast
}\boldsymbol{=M}^{T}(\boldsymbol{\eta}_{0})\boldsymbol{d}$.

In the following we shall denote by $\chi_{l}^{2}(s)$ the non-central
chi-square random variable with $l$ degrees of freedom and non-centrality
parameter $s$.

\begin{theorem}
\label{THM:Asump_contg} We have the following results under both versions of
the contiguous alternative hypothesis:

\begin{enumerate}
\item[i)] $W_{n}(\widehat{\boldsymbol{\eta}}_{\alpha})\underset{n\rightarrow
\infty}{\overset{\mathcal{L}}{\longrightarrow}}\chi_{r}^{2}\left( a\right)  $
under $H_{1,n}$ given in (\ref{a.15}).

\item[ii)] $W_{n}(\widehat{\boldsymbol{\eta}}_{\alpha})\underset{n\rightarrow
\infty}{\overset{\mathcal{L}}{\longrightarrow}}\chi_{r}^{2}\left( b\right)  $
under $H_{1,n}^{\ast}$ given in (\ref{a.16}),
\end{enumerate}
where $a = \boldsymbol{d}^{T}\boldsymbol{M}(\boldsymbol{\eta}_{0})\left(
\boldsymbol{M}^{T}(\boldsymbol{\eta}_{0})\boldsymbol{\Sigma}_{\alpha
}(\boldsymbol{\eta}_{0})\boldsymbol{M}(\boldsymbol{\eta}_{0})\right)
^{-1}\boldsymbol{M}^{T}(\boldsymbol{\eta}_{0})\boldsymbol{d}$ and $b =
\boldsymbol{\boldsymbol{d}}^{\ast T}\left(  \boldsymbol{M}^{T}%
(\boldsymbol{\eta}_{0})\boldsymbol{\Sigma}_{\alpha}(\boldsymbol{\eta}%
_{0})\boldsymbol{M}(\boldsymbol{\eta}_{0})\right)  ^{-1}%
\boldsymbol{\boldsymbol{d}}^{\ast}$.

\begin{proof}
A Taylor series expansion of $\boldsymbol{m}(\widehat{\boldsymbol{\eta}%
}_{\beta})$ around $\boldsymbol{\eta}_{n}$ yields%
\[
\boldsymbol{m}(\widehat{\boldsymbol{\eta}}_{\beta})=\boldsymbol{m}\left(
\boldsymbol{\eta}_{n}\right)  +\boldsymbol{M}^{T}\left(  \boldsymbol{\eta}%
_{n}\right)  (\widehat{\boldsymbol{\eta}}_{\beta}-\boldsymbol{\eta}%
_{n})+o\left(  \left\Vert \widehat{\boldsymbol{\eta}}_{\beta}-\boldsymbol{\eta
}_{n}\right\Vert \right)  .
\]
From (\ref{a.17}), we have
\[
\boldsymbol{m}(\widehat{\boldsymbol{\eta}}_{\beta})=\boldsymbol{M}%
^{T}(\boldsymbol{\eta}_{0})n^{-1/2}\boldsymbol{d}+\boldsymbol{M}^{T}\left(
\boldsymbol{\eta}_{n}\right)  (\widehat{\boldsymbol{\eta}}_{\beta
}-\boldsymbol{\eta}_{n})+o\left(  \left\Vert \widehat{\boldsymbol{\eta}%
}_{\beta}-\boldsymbol{\eta}_{n}\right\Vert \right)  +o\left(  \left\Vert
\boldsymbol{\eta}_{n}-\boldsymbol{\eta}_{0}\right\Vert \right)  .
\]
As $\sqrt{n}\left(  o\left(  \left\Vert \widehat{\boldsymbol{\eta}}_{\beta
}-\boldsymbol{\eta}_{n}\right\Vert \right)  +o\left(  \left\Vert
\boldsymbol{\eta}_{n}-\boldsymbol{\eta}_{0}\right\Vert \right)  \right)
=o_{p}\left(  1\right)  $ and%
\[
\sqrt{n}(\widehat{\boldsymbol{\eta}}_{\beta}-\boldsymbol{\eta}_{n}%
)\underset{n\rightarrow\infty}{\overset{\mathcal{L}}{\longrightarrow}%
}\mathcal{N}(\boldsymbol{0},\boldsymbol{\Sigma}_{\alpha}(\boldsymbol{\eta}%
_{0})).
\]
we have
\[
\sqrt{n}\boldsymbol{m}(\widehat{\boldsymbol{\eta}}_{\beta}%
)\underset{n\rightarrow\infty}{\overset{\mathcal{L}}{\longrightarrow}%
}\mathcal{N}(\boldsymbol{M}^{T}(\boldsymbol{\eta}_{0})\boldsymbol{d}%
,\boldsymbol{M}^{T}(\boldsymbol{\eta}_{0})\boldsymbol{\Sigma}_{\alpha
}(\boldsymbol{\eta}_{0})\boldsymbol{M}(\boldsymbol{\eta}_{0})).
\]
We can observe by the relationship $\boldsymbol{d}^{\ast}\boldsymbol{=M}%
^{T}(\boldsymbol{\eta}_{0})\boldsymbol{d}$, if $\boldsymbol{m}\left(
\boldsymbol{\eta}_{n}\right)  =n^{-1/2}\boldsymbol{d}^{\ast}$ that
\[
\sqrt{n}\boldsymbol{m}(\widehat{\boldsymbol{\eta}}_{\beta}%
)\underset{n\rightarrow\infty}{\overset{\mathcal{L}}{\longrightarrow}%
}\mathcal{N}(\boldsymbol{d}^{\ast},\boldsymbol{M}^{T}(\boldsymbol{\eta}%
_{0})\boldsymbol{\Sigma}_{\alpha}(\boldsymbol{\eta}_{0})\boldsymbol{M}%
(\boldsymbol{\eta}_{0})).
\]
We apply the following result from Anderson (2003) concerning quadratic forms. \textquotedblleft If
$\boldsymbol{Z\sim}\mathcal{N}\left(  \boldsymbol{\mu},\boldsymbol{\Sigma
}\right)  $, $\boldsymbol{\Sigma}$ is a symmetric projection of rank $k$ and
$\boldsymbol{\Sigma\mu=\mu}$, then $\boldsymbol{Z}^{T}\boldsymbol{Z}$ is a
chi-square distribution with $k$ degrees of freedom and noncentrality
parameter $\boldsymbol{\mu}^{T}\boldsymbol{\mu}$\textquotedblright.  In our
case, the quadratic form is
\[
W_{n}=\boldsymbol{Z}^{T}\boldsymbol{Z}%
\]
with
\[
\boldsymbol{Z}=\sqrt{n}\left(  \boldsymbol{M}^{T}(\boldsymbol{\eta}%
_{0})\boldsymbol{\Sigma}_{\alpha}(\boldsymbol{\eta}_{0})\boldsymbol{M}%
(\boldsymbol{\eta}_{0})\right)  ^{-1/2}\boldsymbol{m}%
(\widehat{\boldsymbol{\eta}}_{\beta})
\]
and
\[
\boldsymbol{Z}\underset{n\rightarrow\infty}{\overset{\mathcal{L}%
}{\longrightarrow}}\mathcal{N}\left(  \left(  \boldsymbol{M}^{T}%
(\boldsymbol{\eta}_{0})\boldsymbol{\Sigma}_{\alpha}(\boldsymbol{\eta}%
_{0})\boldsymbol{M}(\boldsymbol{\eta}_{0})\right)  ^{-1/2}\boldsymbol{M}%
^{T}(\boldsymbol{\eta}_{0})\boldsymbol{d},\boldsymbol{I}\right)  ,
\]
where $\boldsymbol{I}$ is the $r\times r$ identity matrix. Hence, the
application of the result is immediate and the noncentrality parameter is
\[
\boldsymbol{d}^{T}\boldsymbol{M}(\boldsymbol{\eta}_{0})\left(  \boldsymbol{M}%
^{T}(\boldsymbol{\eta}_{0})\boldsymbol{\Sigma}_{\alpha}(\boldsymbol{\eta}%
_{0})\boldsymbol{M}(\boldsymbol{\eta}_{0})\right)  ^{-1}\boldsymbol{M}%
^{T}(\boldsymbol{\eta}_{0})\boldsymbol{d}=\boldsymbol{d}^{\ast T}\left(
\boldsymbol{M}^{T}(\boldsymbol{\eta}_{0})\boldsymbol{\Sigma}_{\alpha
}(\boldsymbol{\eta}_{0})\boldsymbol{M}(\boldsymbol{\eta}_{0})\right)
^{-1}\boldsymbol{d}^{\ast}.
\]

\end{proof}
\end{theorem}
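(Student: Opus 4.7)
The plan is to follow the same structural template used in Theorem \ref{THM:asymp_null}, with the fixed null value $\boldsymbol{\eta}_0$ replaced throughout by the drifting true parameter $\boldsymbol{\eta}_n$. The first ingredient I need is an asymptotic normality statement for the MDPDE when the data-generating parameter itself drifts: under $H_{1,n}$, I expect
\[
\sqrt{n}(\widehat{\boldsymbol{\eta}}_\alpha-\boldsymbol{\eta}_n)\underset{n\rightarrow\infty}{\overset{\mathcal{L}}{\longrightarrow}}\mathcal{N}(\boldsymbol{0},\boldsymbol{\Sigma}_\alpha(\boldsymbol{\eta}_0)).
\]
I would obtain this by expanding the estimating equation $\sum_i\boldsymbol{\Psi}_\alpha(Y_i,\boldsymbol{X}_i^T\boldsymbol{\beta},\phi)=\boldsymbol{0}$ around $\boldsymbol{\eta}_n$ and observing that, since $\boldsymbol{\eta}_n\to\boldsymbol{\eta}_0$, continuity of $\boldsymbol{J}_\alpha(\cdot)$ and $\boldsymbol{K}_\alpha(\cdot)$ forces their values at $\boldsymbol{\eta}_n$ to converge to those at $\boldsymbol{\eta}_0$, so that the fixed-parameter CLT of Section \ref{sec2.1} carries over with the same limiting covariance.

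With this in hand, a first-order Taylor expansion of $\boldsymbol{m}$ around $\boldsymbol{\eta}_n$ gives
\[
\boldsymbol{m}(\widehat{\boldsymbol{\eta}}_\alpha)=\boldsymbol{m}(\boldsymbol{\eta}_n)+\boldsymbol{M}^T(\boldsymbol{\eta}_n)(\widehat{\boldsymbol{\eta}}_\alpha-\boldsymbol{\eta}_n)+o_p(n^{-1/2}).
\]
For part (i), inserting (\ref{a.17}) together with $\boldsymbol{m}(\boldsymbol{\eta}_0)=\boldsymbol{0}$ and $\boldsymbol{\eta}_n-\boldsymbol{\eta}_0=n^{-1/2}\boldsymbol{d}$ yields $\boldsymbol{m}(\boldsymbol{\eta}_n)=n^{-1/2}\boldsymbol{M}^T(\boldsymbol{\eta}_0)\boldsymbol{d}+o(n^{-1/2})$; after multiplying by $\sqrt{n}$ and using continuity of $\boldsymbol{M}$ I obtain
\[
\sqrt{n}\,\boldsymbol{m}(\widehat{\boldsymbol{\eta}}_\alpha)\underset{n\rightarrow\infty}{\overset{\mathcal{L}}{\longrightarrow}}\mathcal{N}\bigl(\boldsymbol{M}^T(\boldsymbol{\eta}_0)\boldsymbol{d},\,\boldsymbol{M}^T(\boldsymbol{\eta}_0)\boldsymbol{\Sigma}_\alpha(\boldsymbol{\eta}_0)\boldsymbol{M}(\boldsymbol{\eta}_0)\bigr).
\]
For part (ii), the defining relation $\boldsymbol{m}(\boldsymbol{\eta}_n)=n^{-1/2}\boldsymbol{d}^\ast$ feeds directly into the same expansion and yields the analogous limit with mean $\boldsymbol{d}^\ast$; the equivalence $\boldsymbol{d}^\ast=\boldsymbol{M}^T(\boldsymbol{\eta}_0)\boldsymbol{d}$ noted earlier shows that the two parametrizations describe the same local alternative in the limit.

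To finish, I pass to the quadratic form. Since $\widehat{\boldsymbol{\eta}}_\alpha\overset{P}{\longrightarrow}\boldsymbol{\eta}_0$, Slutsky allows me to replace the plug-in matrix in (\ref{EQ:TS}) by its deterministic limit $\boldsymbol{M}^T(\boldsymbol{\eta}_0)\boldsymbol{\Sigma}_\alpha(\boldsymbol{\eta}_0)\boldsymbol{M}(\boldsymbol{\eta}_0)$. Standardising the limiting Gaussian for $\sqrt{n}\,\boldsymbol{m}(\widehat{\boldsymbol{\eta}}_\alpha)$ by the inverse square root of this matrix writes $W_n(\widehat{\boldsymbol{\eta}}_\alpha)$ as $\|\boldsymbol{Z}_n\|^2+o_p(1)$ with $\boldsymbol{Z}_n\to\mathcal{N}(\boldsymbol{\mu},\boldsymbol{I}_r)$; the classical quadratic-form identity then delivers $\chi_r^2(\boldsymbol{\mu}^T\boldsymbol{\mu})$, and a direct computation verifies $\boldsymbol{\mu}^T\boldsymbol{\mu}=a$ in case (i) and $\boldsymbol{\mu}^T\boldsymbol{\mu}=b$ in case (ii). The hard part is really the first step, namely transferring the fixed-parameter MDPDE CLT to drifting parameters along $\boldsymbol{\eta}_0+n^{-1/2}\boldsymbol{d}$: this requires a uniform-in-a-shrinking-neighbourhood control on the score $\boldsymbol{\Psi}_\alpha$ (or, equivalently, a contiguity argument comparing the laws under $\boldsymbol{\eta}_n$ and $\boldsymbol{\eta}_0$). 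Once this drifting-parameter CLT is in place, everything downstream is a routine delta-method-plus-Slutsky calculation.
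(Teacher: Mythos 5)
Your proposal follows essentially the same route as the paper's own proof: a Taylor expansion of $\boldsymbol{m}$ around the drifting parameter $\boldsymbol{\eta}_{n}$, the asymptotic normality of $\sqrt{n}(\widehat{\boldsymbol{\eta}}_{\alpha}-\boldsymbol{\eta}_{n})$ with limiting covariance $\boldsymbol{\Sigma}_{\alpha}(\boldsymbol{\eta}_{0})$, the identification of the limiting mean via (\ref{a.17}) and $\boldsymbol{d}^{\ast}=\boldsymbol{M}^{T}(\boldsymbol{\eta}_{0})\boldsymbol{d}$, and finally the standard noncentral chi-square result for quadratic forms of Gaussian vectors. If anything, you are more explicit than the paper about the two points it glosses over — the drifting-parameter CLT (which the paper simply asserts) and the Slutsky step replacing the plug-in matrix in (\ref{EQ:TS}) by its deterministic limit — so the argument is correct and structurally identical.
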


\section{Robustness of the Proposed Wald-type Test Statistics \label{sec4}}

\subsection{Influence Function of the Wald-type Test Statistics}

In order to study the robustness of the proposed Wald-type tests of Section
\ref{sec3}, we will start with the influence function of the Wald-type test
statistics $W_{n}(\widehat{\boldsymbol{\eta}}_{\alpha})$ in (\ref{EQ:TS}) for
testing the general composite hypothesis (\ref{EQ:hyp}). Consider the MDPDE
functional $T_{\alpha}(H)$ at the true joint distribution $H$ of $(Y,
\boldsymbol{X})$ as defined in Section \ref{sec2.2} and define the statistical
functional corresponding to the Wald-type test statistics $W_{n}(\widehat{\boldsymbol{\eta}}_{\alpha})$ at $H$ as
(ignoring the multiplier $n$)
\begin{equation}
W_{\alpha}(H)=\boldsymbol{m}\left(  T_{\alpha}(H)\right)  ^{T} \left[
\boldsymbol{M}\left(  \boldsymbol{\eta}\right)  ^{T}\boldsymbol{\Sigma
}_{\alpha}\left(  \boldsymbol{\eta}\right)  \boldsymbol{M}\left(
\boldsymbol{\eta}\right)  \right]  ^{-1} \boldsymbol{m}\left(  T_{\alpha
}(H)\right)  . \label{WoF}%
\end{equation}

Again, considering the contaminated distribution, $H_{\epsilon}$, the
influence function of the Wald-type test functional $W_{\alpha}(\cdot)$ is
given by
\begin{align}
\mathcal{IF}((y_{t},\boldsymbol{x}_{t}),W_{\alpha},H)  &  = \left.
\frac{\partial W_{\alpha}(H_{\epsilon})}{\partial\epsilon}\right\vert
_{\epsilon=0}\nonumber\\
&  = \boldsymbol{m}\left(  T_{\alpha}(H)\right)  ^{T} \left[  \boldsymbol{M}%
\left(  \boldsymbol{\eta}\right)  ^{T}\boldsymbol{\Sigma}_{\alpha}\left(
\boldsymbol{\eta}\right)  \boldsymbol{M}\left(  \boldsymbol{\eta}\right)
\right]  ^{-1} \boldsymbol{M}\left(  \boldsymbol{\eta}\right)  ^{T}%
\mathcal{IF}((y_{t},\boldsymbol{x}_{t}),T_{\alpha},H).\nonumber
\end{align}
Suppose $\boldsymbol{\eta}_{0} = (\boldsymbol{\beta}_{0}, \phi_{0})$ be the
true parameter value under null hypothesis given in (\ref{EQ:hyp}) that
satisfies $\boldsymbol{m}(\boldsymbol{\eta}_{0}) = 0$ and the corresponding
null joint distribution be $H_{0}(y,\boldsymbol{x})=f(y,\boldsymbol{x}%
^{T}\boldsymbol{\beta}_{0}, \phi_{0})G(\boldsymbol{x})$. Note that, under
$H_{0}$, $T_{\alpha}(H_{0}) = \boldsymbol{\eta}_{0}$ by Fisher consistency of
the MDPDE and hence $\mathcal{IF}((y_{t},\boldsymbol{x}_{t}),W_{\alpha}%
,H_{0})=0$. Hence, the first order influence function cannot portray the
robustness of the proposed Wald-type tests (like other Wald-type tests in
Rousseeuw and Ronchetti, 1979; Toma and Broniatowski, 2011; Ghosh et al., 2016,
etc.) and we need to derive its second order influence function.

By another differentiation, we get the second order influence function of
$W_{\alpha}(\cdot)$ at $H$ as given by
\begin{align}
&  \mathcal{IF}_{2}((y_{t},\boldsymbol{x}_{t}),W_{\alpha},H) =\left.
\frac{\partial^{2} W_{\alpha}(H_{\epsilon})}{\partial\epsilon^{2}}\right\vert
_{\epsilon=0}\nonumber\\
&  = \boldsymbol{m}\left(  T_{\alpha}(H)\right)  ^{T} \left[  \boldsymbol{M}%
^{T}\left(  \boldsymbol{\eta}\right)  \boldsymbol{\Sigma}_{\alpha}\left(
\boldsymbol{\eta}\right)  \boldsymbol{M}\left(  \boldsymbol{\eta}\right)
\right]  ^{-1} \boldsymbol{M}^{T}\left(  \boldsymbol{\eta}\right)
\mathcal{IF}_{2}((y_{t},\boldsymbol{x}_{t}),T_{\alpha},H)\nonumber\\
&  + \mathcal{IF}((y_{t},\boldsymbol{x}_{t}),T_{\alpha},H)^{T}\boldsymbol{M}%
\left(  \boldsymbol{\eta}\right)  \left[  \boldsymbol{M}^{T}\left(
\boldsymbol{\eta}\right) \boldsymbol{\Sigma}_{\alpha}\left(  \boldsymbol{\eta
}\right)  \boldsymbol{M}\left(  \boldsymbol{\eta}\right)  \right]
^{-1}\boldsymbol{M}^{T}\left(  \boldsymbol{\eta}\right)  \mathcal{IF}%
((y_{t},\boldsymbol{x}_{t}),T_{\alpha},H).\nonumber
\end{align}
Note that the influence function of the test statistic is directly related to the influence function of the corresponding estimator. 
In particular, at the null distribution $H_{0}(y,\boldsymbol{x})$, we get the
nonzero second order influence function indicating the robustness properties
of the proposed Wald-type test statistics. These are summarized in the
following theorem.

\begin{theorem}
\label{THM:second_IT_test} The influence functions of the proposed Wald-type
test statistics $W_{n}$ at the null distribution $H_{0}(y,\boldsymbol{x}%
)=f(y,\boldsymbol{x}^{T}\boldsymbol{\beta}_{0}, \phi_{0})G(\boldsymbol{x})$ is
given by
\begin{align}
&  \mathcal{IF}((y_{t},\boldsymbol{x}_{t}),W_{\alpha},H_{0}) = 0\nonumber\\
&  \mathcal{IF}_{2}((y_{t},\boldsymbol{x}_{t}),W_{\alpha},H_{0})\nonumber\\
&  = \mathcal{IF}((y_{t},\boldsymbol{x}_{t}),T_{\alpha},H_{0})^{T}%
\boldsymbol{M}\left(  \boldsymbol{\eta}_{0}\right)  \left[  \boldsymbol{M}%
^{T}\left(  \boldsymbol{\eta}_{0}\right)  \boldsymbol{\Sigma}_{\alpha}\left(
\boldsymbol{\eta}_{0}\right)  \boldsymbol{M}\left(  \boldsymbol{\eta}%
_{0}\right)  \right]  ^{-1} \boldsymbol{M}^{T}\left(  \boldsymbol{\eta}%
_{0}\right)  \mathcal{IF}((y_{t},\boldsymbol{x}_{t}),T_{\alpha},H_{0}%
)\nonumber\\
&  = \Psi_{\alpha}(y_{t},\boldsymbol{x}_{t}^{T}\boldsymbol{\beta},\phi)^{T}
J_{\alpha}(\boldsymbol{\eta})^{-1}\boldsymbol{M}\left(  \boldsymbol{\eta}%
_{0}\right)  \left[  \boldsymbol{M}^{T}\left(  \boldsymbol{\eta}_{0}\right)
\boldsymbol{\Sigma}_{\alpha}\left(  \boldsymbol{\eta}_{0}\right)
\boldsymbol{M}\left(  \boldsymbol{\eta}_{0}\right)  \right]  ^{-1}
\boldsymbol{M}^{T}\left(  \boldsymbol{\eta}_{0}\right)  J_{\alpha
}(\boldsymbol{\eta})^{-1} \Psi_{\alpha}(y_{t},\boldsymbol{x}_{t}%
^{T}\boldsymbol{\beta},\phi).\nonumber
\end{align}

\end{theorem}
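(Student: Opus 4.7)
The plan is to specialize the general-$H$ expressions for $\mathcal{IF}$ and $\mathcal{IF}_{2}$ of the Wald-type functional $W_{\alpha}(\cdot)$ that are already displayed immediately above the theorem statement. Inspecting those two displays, every term except the pure quadratic form in $\mathcal{IF}(T_{\alpha},H)$ carries an explicit left factor of $\boldsymbol{m}(T_{\alpha}(H))^{T}$. Consequently, the whole argument reduces to (a) showing that this prefactor vanishes at $H_{0}$ under the null hypothesis and (b) plugging in the closed-form expression for $\mathcal{IF}(T_{\alpha},H_{0})$ obtained in Section \ref{sec2.2}.

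For the first assertion I would invoke Fisher consistency of the MDPDE functional, established in equation (\ref{EQ:func_est}), which guarantees $T_{\alpha}(H_{0})=\boldsymbol{\eta}_{0}$. Since $H_{0}$ is a null-hypothesis distribution, $\boldsymbol{m}(\boldsymbol{\eta}_{0})=\boldsymbol{0}$, so $\boldsymbol{m}(T_{\alpha}(H_{0}))=\boldsymbol{0}$. Substituting this into the first-order $\mathcal{IF}$ formula kills every term and yields $\mathcal{IF}((y_{t},\boldsymbol{x}_{t}),W_{\alpha},H_{0})=0$.

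For the second-order claim, the $\mathcal{IF}_{2}$ formula already displayed splits into two pieces: (i) a term of the form $\boldsymbol{m}(T_{\alpha}(H))^{T}[\boldsymbol{M}^{T}\boldsymbol{\Sigma}_{\alpha}\boldsymbol{M}]^{-1}\boldsymbol{M}^{T}\mathcal{IF}_{2}(T_{\alpha},H)$, and (ii) the quadratic form $\mathcal{IF}(T_{\alpha},H)^{T}\boldsymbol{M}[\boldsymbol{M}^{T}\boldsymbol{\Sigma}_{\alpha}\boldsymbol{M}]^{-1}\boldsymbol{M}^{T}\mathcal{IF}(T_{\alpha},H)$. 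Piece (i) vanishes at $H_{0}$ by the same Fisher-consistency argument, leaving exactly the quadratic form claimed in the first line of the theorem. Substituting the closed-form expression $\mathcal{IF}((y_{t},\boldsymbol{x}_{t}),T_{\alpha},H_{0})=\boldsymbol{J}_{\alpha}(\boldsymbol{\eta}_{0})^{-1}\boldsymbol{\Psi}_{\alpha}(y_{t},\boldsymbol{x}_{t}^{T}\boldsymbol{\beta}_{0},\phi_{0})$ then produces the second, fully explicit form stated in the theorem.

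The only genuine obstacle I anticipate is justifying the $\mathcal{IF}_{2}$ formula itself, which is asserted without a proof above the theorem. I would obtain it by twice-differentiating $W_{\alpha}(H_{\epsilon})$ with respect to $\epsilon$ using the product rule, treating $\boldsymbol{m}(T_{\alpha}(H_{\epsilon}))$ and $[\boldsymbol{M}^{T}\boldsymbol{\Sigma}_{\alpha}\boldsymbol{M}]^{-1}(T_{\alpha}(H_{\epsilon}))$ as smooth compositions through the functional $T_{\alpha}$. Several additional cross terms appear, but each carries at least one factor of $\boldsymbol{m}(T_{\alpha}(H_{\epsilon}))$ or its derivative against $[\boldsymbol{M}^{T}\boldsymbol{\Sigma}_{\alpha}\boldsymbol{M}]^{-1}$; evaluating at $\epsilon=0$ under $H_{0}$ and again using the Fisher-consistency identity $\boldsymbol{m}(T_{\alpha}(H_{0}))=\boldsymbol{0}$ makes them all disappear, so they do not affect the stated result.
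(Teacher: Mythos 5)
Your proposal is correct and follows essentially the same route as the paper: the theorem is obtained by specializing the general first- and second-order influence function expressions displayed just before it, using Fisher consistency ($T_{\alpha}(H_{0})=\boldsymbol{\eta}_{0}$, hence $\boldsymbol{m}(T_{\alpha}(H_{0}))=\boldsymbol{0}$) to annihilate every term carrying the prefactor $\boldsymbol{m}(T_{\alpha}(H))^{T}$, and then substituting $\mathcal{IF}((y_{t},\boldsymbol{x}_{t}),T_{\alpha},H_{0})=J_{\alpha}(\boldsymbol{\eta})^{-1}\Psi_{\alpha}(y_{t},\boldsymbol{x}_{t}^{T}\boldsymbol{\beta},\phi)$. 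Your additional remark that the cross terms arising in a full second differentiation of $W_{\alpha}(H_{\epsilon})$ also vanish at $H_{0}$ for the same reason is a welcome point of care that the paper leaves implicit.
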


Clearly, the second order influence function $\mathcal{IF}_2$ is bounded whenever the
function $\Psi_{\alpha}(y_{t},\boldsymbol{x}_{t}^{T}\boldsymbol{\beta},\phi)$
is bounded, i.e., for all $\alpha>0$, implying the robustness of the proposed
Wald-type tests with $\alpha>0$. However, at $\alpha=0$, $\Psi_{0}%
(y_{t},\boldsymbol{x}_{t}^{T}\boldsymbol{\beta},\phi)$ and hence the second
order influence function is unbounded implying the non-robust nature of the
classical MLE based Wald-test.

\subsection{Level and Power Robustness}

Let us now study the stability of the level and the power of the proposed
Wald-type test statistics under data contamination. For this, we will derive
the level and power influence functions respectively under the null hypothesis
and the contiguous alternative hypotheses $\boldsymbol{\eta}_{n}=\boldsymbol{\eta}%
_{0}+n^{-1/2}\boldsymbol{d}$ in (\ref{a.15}). Considering contamination over
these hypothesis as in Hampel et al.~(1986) and Ghosh et al.~(2016), we define
the LIF and PIF respectively through the asymptotic distribution under
\[
H_{n,\epsilon,(y_{t},\boldsymbol{x}_{t})}^{P}=(1-\frac{\epsilon}{\sqrt{n}%
})H_{\boldsymbol{\eta}_{n}}+\frac{\epsilon}{\sqrt{n}}\wedge_{(y_{t}%
,\boldsymbol{x}_{t})},~~~\mbox{ and } ~~H_{n,\epsilon,(y_{t},\boldsymbol{x}%
_{t})}^{L}=(1-\frac{\epsilon}{\sqrt{n}})H_{\boldsymbol{\eta}_{0}}%
+\frac{\epsilon}{\sqrt{n}}\wedge_{(y_{t},\boldsymbol{x}_{t})},
\]
where $H_{\boldsymbol{\eta}}$ denote the joint model distribution of
$(Y,\boldsymbol{X})$ with parameter $\boldsymbol{\eta}=(\boldsymbol{\beta}%
^{T}, \phi)^{T}$, given by $H_{\boldsymbol{\eta}}(y,\boldsymbol{x}%
)=f(y,\boldsymbol{x}^{T}\boldsymbol{\beta}, \phi)G(\boldsymbol{x})$. For the
proposed Wald-type test statistics $W_{n}$, its LIF and PIF are defined by
\[
\mathcal{LIF}((y_{t},\boldsymbol{x}_{t}); W_{n},H_{\boldsymbol{\eta}_{0}})
=\left.  \dfrac{\partial}{\partial\epsilon}\alpha(\epsilon,(y_{t}%
,\boldsymbol{x}_{t}))\right\vert _{\epsilon=0} = \left.  \dfrac{\partial
}{\partial\epsilon} \lim\limits_{n\rightarrow\infty}P_{H_{n,\epsilon
,(y_{t},\boldsymbol{x}_{t})}^{L}}(W_{n}>\chi_{r,\alpha_0}^{2}) \right\vert
_{\epsilon=0},
\]
and
\[
\mathcal{PIF}((y_{t},\boldsymbol{x}_{t}); W_{n},H_{\boldsymbol{\eta}_{0}})
=\left.  \dfrac{\partial}{\partial\epsilon}\pi(\boldsymbol{\eta}_{n}%
,\epsilon,(y_{t},\boldsymbol{x}_{t}))\right\vert _{\epsilon=0} =\left.
\dfrac{\partial}{\partial\epsilon} \lim\limits_{n\rightarrow\infty
}P_{H_{n,\epsilon,(y_{t},\boldsymbol{x}_{t})}^{P}}(W_{n}>\chi_{r,\alpha_0}^{2})
\right\vert _{\epsilon=0}.
\]
%

\begin{theorem}
\label{THM:7asymp_power_one} Under the assumptions of Theorem 5, we have the following:

\begin{enumerate}
\item Under $H_{n,\epsilon,(y_{t},\boldsymbol{x}_{t})}^{P}$, the proposed
Wald-type test statistics $W_{n}$ asymptotically follows a non-central
chi-square distribution with $r$ degrees of freedom and  non-centrality
parameter
\begin{align}
\delta= \widetilde{\boldsymbol{d}}_{\epsilon,(y_{t},\boldsymbol{x}_{t}%
),\alpha}(\boldsymbol{\eta}_{0})^{T} \boldsymbol{M}(\boldsymbol{\eta}_{0})
\left[  \boldsymbol{M}^{T}(\boldsymbol{\eta}_{0})\boldsymbol{\Sigma}_{\alpha
}\left(  \boldsymbol{\eta}_{0}\right)  \boldsymbol{M}(\boldsymbol{\eta}%
_{0})\right]  ^{-1}\boldsymbol{M}^{T}(\boldsymbol{\eta}_{0})
\widetilde{\boldsymbol{d}}_{\epsilon,(y_{t},\boldsymbol{x}_{t}),\alpha
}(\boldsymbol{\eta}_{0}),\label{EQ:delta}%
\end{align}
where $\widetilde{\boldsymbol{d}}_{\epsilon,(y_{t},\boldsymbol{x}_{t}),\alpha
}(\boldsymbol{\eta}_{0}) =\boldsymbol{d}+\epsilon\mathcal{IF}((y_{t}%
,\boldsymbol{x}_{t}),\boldsymbol{T}_{\alpha},H_{\boldsymbol{\eta}_{0}}).$

\item The asymptotic power function under $G_{n,\epsilon,(y_{t},\boldsymbol{x}%
_{t})}^{P}$ can be approximated as
\begin{align}
&  \pi(\boldsymbol{\eta}_{n},\epsilon,(y_{t},\boldsymbol{x}_{t})) =
\lim\limits_{n\rightarrow\infty}P_{H_{n,\epsilon,(y_{t},\boldsymbol{x}_{t}%
)}^{P}}(W_{n}>\chi_{r,\alpha_0}^{2})\nonumber\\
&  \cong\sum\limits_{v=0}^{\infty} C_{v}\left(  \boldsymbol{M}^{T}%
(\boldsymbol{\eta}_{0})\widetilde{\boldsymbol{d}}_{\epsilon,(y_{t}%
,\boldsymbol{x}_{t}),\alpha}(\boldsymbol{\beta}_{0}), \left[  \boldsymbol{M}%
^{T}(\boldsymbol{\eta}_{0})\boldsymbol{\Sigma}_{\alpha}\left(
\boldsymbol{\eta}_{0}\right)  \boldsymbol{M}(\boldsymbol{\eta}_{0})\right]
^{-1}\right)  P\left(  \chi_{r+2v}^{2}>\chi_{r,\alpha_0}^{2}\right)  ,
\end{align}
where
\[
C_{v}\left(  \boldsymbol{t},\boldsymbol{A}\right)  =\frac{\left(
\mathbf{t}^{T}\boldsymbol{A}\mathbf{t}\right)  ^{v}}{v!2^{v}}e^{-\frac{1}
{2}\mathbf{t}^{T}\boldsymbol{A}\mathbf{t}}.
\]

\end{enumerate}
\end{theorem}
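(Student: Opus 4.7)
The plan is to derive the limiting distribution of $\sqrt{n}(\widehat{\boldsymbol{\eta}}_\alpha-\boldsymbol{\eta}_0)$ under $H_{n,\epsilon,(y_t,\boldsymbol{x}_t)}^P$ by showing that the two first-order perturbations---the contiguous shift $n^{-1/2}\boldsymbol{d}$ of the true parameter and the infinitesimal contamination of magnitude $\epsilon/\sqrt{n}$ at $(y_t,\boldsymbol{x}_t)$---combine additively in the asymptotic mean, leaving the limiting covariance equal to $\boldsymbol{\Sigma}_\alpha(\boldsymbol{\eta}_0)$. Once this is in place, part~(1) reduces to the same quadratic-form calculation as in the proof of Theorem~\ref{THM:Asump_contg}, and part~(2) is a direct consequence of the Poisson-mixture representation of the non-central chi-square distribution.

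To implement the first step I will apply a first-order von Mises expansion of the MDPDE functional $T_\alpha$ around the model distribution $H_{\boldsymbol{\eta}_n}$ at the contaminated law,
$$T_\alpha(H_{n,\epsilon,(y_t,\boldsymbol{x}_t)}^P) = T_\alpha(H_{\boldsymbol{\eta}_n}) + \frac{\epsilon}{\sqrt{n}}\,\mathcal{IF}((y_t,\boldsymbol{x}_t),T_\alpha,H_{\boldsymbol{\eta}_n}) + o\!\left(\tfrac{1}{\sqrt{n}}\right).$$
Fisher consistency gives $T_\alpha(H_{\boldsymbol{\eta}_n})=\boldsymbol{\eta}_n=\boldsymbol{\eta}_0+n^{-1/2}\boldsymbol{d}$, and continuity of $\boldsymbol{\eta}\mapsto\mathcal{IF}((y_t,\boldsymbol{x}_t),T_\alpha,H_{\boldsymbol{\eta}})$ at $\boldsymbol{\eta}_0$ yields $\mathcal{IF}((y_t,\boldsymbol{x}_t),T_\alpha,H_{\boldsymbol{\eta}_n})=\mathcal{IF}((y_t,\boldsymbol{x}_t),T_\alpha,H_{\boldsymbol{\eta}_0})+O(n^{-1/2})$, so the functional mean is $\boldsymbol{\eta}_0+n^{-1/2}\widetilde{\boldsymbol{d}}_{\epsilon,(y_t,\boldsymbol{x}_t),\alpha}(\boldsymbol{\eta}_0)+o(n^{-1/2})$. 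Combining with the M-estimator central limit theorem of Section~\ref{sec2.1} (whose variance is unaffected to leading order, since the contamination is of size $O(n^{-1/2})$), I obtain
$$\sqrt{n}\bigl(\widehat{\boldsymbol{\eta}}_\alpha-\boldsymbol{\eta}_0\bigr)\xrightarrow[n\to\infty]{\mathcal{L}}\mathcal{N}\!\left(\widetilde{\boldsymbol{d}}_{\epsilon,(y_t,\boldsymbol{x}_t),\alpha}(\boldsymbol{\eta}_0),\boldsymbol{\Sigma}_\alpha(\boldsymbol{\eta}_0)\right).$$
A Taylor expansion of $\boldsymbol{m}$ around $\boldsymbol{\eta}_0$, using $\boldsymbol{m}(\boldsymbol{\eta}_0)=\boldsymbol{0}$, then yields $\sqrt{n}\boldsymbol{m}(\widehat{\boldsymbol{\eta}}_\alpha)\xrightarrow{\mathcal{L}}\mathcal{N}(\boldsymbol{M}^T(\boldsymbol{\eta}_0)\widetilde{\boldsymbol{d}}_{\epsilon,(y_t,\boldsymbol{x}_t),\alpha}(\boldsymbol{\eta}_0),\boldsymbol{M}^T(\boldsymbol{\eta}_0)\boldsymbol{\Sigma}_\alpha(\boldsymbol{\eta}_0)\boldsymbol{M}(\boldsymbol{\eta}_0))$, and writing $W_n=\boldsymbol{Z}^T\boldsymbol{Z}$ with $\boldsymbol{Z}=\sqrt{n}(\boldsymbol{M}^T\boldsymbol{\Sigma}_\alpha\boldsymbol{M})^{-1/2}\boldsymbol{m}(\widehat{\boldsymbol{\eta}}_\alpha)$, exactly as in the proof of Theorem~\ref{THM:Asump_contg}, delivers $W_n\xrightarrow{\mathcal{L}}\chi_r^2(\delta)$ with $\delta$ as in~(\ref{EQ:delta}).

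For part~(2), the asymptotic power is $P(\chi_r^2(\delta)>\chi_{r,\alpha}^2)$ with $\delta$ from part~(1), and the claimed series follows from the classical Poisson-mixture representation
$$P(\chi_r^2(\delta)>c)=\sum_{v=0}^\infty \frac{(\delta/2)^v}{v!}e^{-\delta/2}\,P(\chi_{r+2v}^2>c),$$
by identifying $\delta/2 = \tfrac{1}{2}\boldsymbol{t}^T\boldsymbol{A}\boldsymbol{t}$ with $\boldsymbol{t}=\boldsymbol{M}^T(\boldsymbol{\eta}_0)\widetilde{\boldsymbol{d}}_{\epsilon,(y_t,\boldsymbol{x}_t),\alpha}(\boldsymbol{\eta}_0)$ and $\boldsymbol{A}=[\boldsymbol{M}^T(\boldsymbol{\eta}_0)\boldsymbol{\Sigma}_\alpha(\boldsymbol{\eta}_0)\boldsymbol{M}(\boldsymbol{\eta}_0)]^{-1}$, which matches the definition of $C_v(\boldsymbol{t},\boldsymbol{A})$. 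The main technical obstacle lies in the joint justification underlying part~(1): one must control the second-order remainder in the von Mises expansion uniformly in $n$ so that it stays $o(n^{-1/2})$ under $H_{n,\epsilon,(y_t,\boldsymbol{x}_t)}^P$, and one must show that the contiguous change of parameter $\boldsymbol{\eta}_0\to\boldsymbol{\eta}_n$ affects the limit only through the mean and not through the covariance. The cleanest way to handle both points is a direct linearisation of the estimating equations of Section~\ref{sec2.1} around $\boldsymbol{\eta}_n$, combined with a Le Cam contiguity argument relating expectations under $H_{\boldsymbol{\eta}_n}$ and $H_{\boldsymbol{\eta}_0}$. Once the combined mean shift is in place, the remaining quadratic-form and Poisson-mixture steps are routine.
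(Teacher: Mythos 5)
Your proposal follows essentially the same route as the paper's proof: you decompose the asymptotic mean shift of $\sqrt{n}(\widehat{\boldsymbol{\eta}}_{\alpha}-\boldsymbol{\eta}_{0})$ into the contiguous part $\boldsymbol{d}$ plus the contamination part $\epsilon\,\mathcal{IF}((y_{t},\boldsymbol{x}_{t}),T_{\alpha},H_{\boldsymbol{\eta}_{0}})$ (your von Mises expansion is exactly the paper's Taylor approximation $\sqrt{n}(\boldsymbol{\eta}_{n}^{\ast}-\boldsymbol{\eta}_{0})=\widetilde{\boldsymbol{d}}_{\epsilon,(y_{t},\boldsymbol{x}_{t}),\alpha}(\boldsymbol{\eta}_{0})+o_{p}(1)$, merely centred at $\boldsymbol{\eta}_{0}$ rather than at $\boldsymbol{\eta}_{n}^{\ast}$), then apply the delta method to $\boldsymbol{m}$, the quadratic-form lemma, and the Poisson-mixture series for the non-central chi-square. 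The argument is correct, and if anything you are more explicit than the paper about the remainder control and the invariance of the limiting covariance.
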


\begin{proof}
Let us denote $\boldsymbol{\eta}_{n}^{\ast}=\boldsymbol{T}_{\alpha
}(H_{n,\epsilon,(y_{t},\boldsymbol{x}_{t})}^{P})$. Then, the asymptotic
distribution of the MDPDE $\widehat{\boldsymbol{\eta}}_{\alpha}$ under
$H_{n,\epsilon,(y_{t},\boldsymbol{x}_{t})}^{P}$ yields
\begin{align}
\sqrt{n}\left(  \widehat{\boldsymbol{\eta}}_{\alpha}-\boldsymbol{\eta}%
_{n}^{\ast}\right)  \underset{n\rightarrow\infty}{\overset{L}{\longrightarrow
}}\mathcal{N}\left(  \boldsymbol{0,\Sigma}_{\alpha}\left(  \boldsymbol{\beta
}_{0}\right)  \right) .\label{eq1}%
\end{align}
Now, using a suitable Taylor series approximation and the above asymptotic
distribution, we get
\begin{align}
W_{n}\left(  \widehat{\boldsymbol{\eta}}_{\alpha}\right)   &  =n\boldsymbol{m}%
\left(  \widehat{\boldsymbol{\eta}}_{\alpha}\right)  ^{T}\left[
\boldsymbol{M}^{T}\left(  \boldsymbol{\eta}_{0}\right)  \boldsymbol{\Sigma
}_{\alpha}\left(  \boldsymbol{\eta}_{0}\right)  \boldsymbol{M}\left(
\boldsymbol{\eta}_{0}\right)  \right]  ^{-1}\boldsymbol{m}\left(
\widehat{\boldsymbol{\eta}}_{\alpha}\right) \nonumber\\
&  =n\boldsymbol{m}\left(  \boldsymbol{\eta}_{n}^{\ast}\right)  ^{T}\left[
\boldsymbol{M}^{T}\left(  \boldsymbol{\eta}_{0}\right)  \boldsymbol{\Sigma
}_{\alpha}\left(  \boldsymbol{\eta}_{0}\right)  \boldsymbol{M}\left(
\boldsymbol{\eta}_{0}\right)  \right]  ^{-1}\boldsymbol{m}\left(
\boldsymbol{\eta}_{n}^{\ast}\right) \nonumber\\
&  ~+n\left(  \widehat{\boldsymbol{\eta}}_{\alpha}-\boldsymbol{\eta}_{n}%
^{\ast}\right)  ^{T}\boldsymbol{M}\left(  \boldsymbol{\eta}_{0}\right)
\left[  \boldsymbol{M}^{T}\left(  \boldsymbol{\eta}_{0}\right)
\boldsymbol{\Sigma}_{\alpha}\left(  \boldsymbol{\eta}_{0}\right)
\boldsymbol{M}\left(  \boldsymbol{\eta}_{0}\right)  \right]  ^{-1}%
\boldsymbol{M}^{T}\left(  \boldsymbol{\eta}_{0}\right) \left(
\widehat{\boldsymbol{\eta}}_{\alpha}-\boldsymbol{\eta}_{n}^{\ast}\right)
\nonumber\\
&  ~+n\left(  \widehat{\boldsymbol{\eta}}_{\alpha}-\boldsymbol{\eta}_{n}%
^{\ast}\right)  ^{T}\boldsymbol{M}\left(  \boldsymbol{\eta}_{0}\right)
\left[  \boldsymbol{M}^{T}\left(  \boldsymbol{\eta}_{0}\right)
\boldsymbol{\Sigma}_{\alpha}\left(  \boldsymbol{\eta}_{0}\right)
\boldsymbol{M}\left(  \boldsymbol{\eta}_{0}\right)  \right]  ^{-1}%
\boldsymbol{m}\left(  \boldsymbol{\eta}_{n}^{\ast}\right)  +o_{P}(1).\nonumber
\end{align}
Again, another Taylor series approximation yields
\begin{align}
\sqrt{n}(\boldsymbol{\eta}_{n}^{\ast}-\boldsymbol{\eta}_{0})  &
=\boldsymbol{d}+\epsilon\mathcal{IF}\left(  (y_{t},\boldsymbol{x}%
_{t}),\boldsymbol{T}_{\alpha},H_{\boldsymbol{\eta}_{0}}\right)  +o_{p}%
(\boldsymbol{1}_{p})\nonumber\\
&  =\widetilde{\boldsymbol{d}}_{\epsilon,(y_{t},\boldsymbol{x}_{t}),\alpha
}(\boldsymbol{\theta}_{0})+o_{p}(\boldsymbol{1}_{p}), \label{CDT}%
\end{align}
and hence
\begin{align}
\sqrt{n}\boldsymbol{m}\left(  \boldsymbol{\eta}_{n}^{\ast}\right)   &
=\boldsymbol{M}^{T}\left(  \boldsymbol{\eta}_{0}\right) \sqrt{n}%
(\boldsymbol{\eta}_{n}^{\ast}-\boldsymbol{\eta}_{0})+o_{p}(\boldsymbol{1}%
_{p})\nonumber\\
&  =\boldsymbol{M}^{T}\left(  \boldsymbol{\eta}_{0}\right)
\widetilde{\boldsymbol{d}}_{\epsilon,(y_{t},\boldsymbol{x}_{t}),\alpha
}(\boldsymbol{\eta}_{0})+o_{p}(\boldsymbol{1}_{p}), \label{EQ:eq1}%
\end{align}
using $\boldsymbol{m}\left(  \boldsymbol{\eta}_{0}\right)  =0$. Therefore,
combining all the above results, we get
\[
W_{n}\left(  \widehat{\boldsymbol{\eta}}_{\alpha}\right)  =\boldsymbol{Z}%
_{n}^{T}\left[  \boldsymbol{M}^{T}\left(  \boldsymbol{\eta}_{0}\right)
\boldsymbol{\Sigma}_{\alpha}\left(  \boldsymbol{\eta}_{0}\right)
\boldsymbol{M}\left(  \boldsymbol{\eta}_{0}\right)  \right]  ^{-1}%
\boldsymbol{Z}_{n}+o_{p}(1),
\]
where
\[
\boldsymbol{Z}_{n}=\boldsymbol{M}^{T}\left(  \boldsymbol{\eta}_{0}\right)
\sqrt{n}\left(  \widehat{\boldsymbol{\eta}}_{\alpha}-\boldsymbol{\eta}%
_{n}^{\ast}\right)  +\boldsymbol{M}^{T}\left(  \boldsymbol{\eta}_{0}\right)
\widetilde{\boldsymbol{d}}_{\epsilon,(y_{t},\boldsymbol{x}_{t}),\alpha
}(\boldsymbol{\eta}_{0}).
\]
But, by (\ref{eq1}),
\[
\boldsymbol{Z}_{n}\underset{n\rightarrow\infty}{\overset{\mathcal{L}%
}{\longrightarrow}}\mathcal{N}\left(  \boldsymbol{M}^{T}\left(
\boldsymbol{\eta}_{0}\right) \widetilde{\boldsymbol{d}}_{\epsilon
,(y_{t},\boldsymbol{x}_{t}),\alpha}(\boldsymbol{\eta}_{0}),\ \boldsymbol{M}%
^{T}\left(  \boldsymbol{\eta}_{0}\right) \boldsymbol{\Sigma}_{\alpha}\left(
\boldsymbol{\eta}_{0}\right)  \boldsymbol{M}\left(  \boldsymbol{\eta}%
_{0}\right)  \right)  ,
\]
which implies that $W_{n}\left(  \widehat{\boldsymbol{\eta}}_{\alpha}\right)
\underset{n\rightarrow\infty}{\overset{\mathcal{L}}{\longrightarrow}}\chi
_{r}^{2}(\delta),$ a non-central $\chi^{2}$ random variable with degrees of
freedom $r$ and non-centrality parameter $\delta$ as defined in
(\ref{EQ:delta}).

The second part of the theorem follows by the infinite series expansion of a the
above non-central $\chi^{2}$ distribution in terms of the central chi-square
variables as 
\begin{align}
\pi(\boldsymbol{\eta}_{n},\epsilon,(y_{t},\boldsymbol{x}_{t}))  &
=\lim_{n\rightarrow\infty}P_{H_{n,\epsilon,(y_{t},\boldsymbol{x}_{t})}^{P}%
}(W_{n}>\chi_{r,\alpha_0}^{2}) \cong P(\chi_{r}^{2}(\delta) >\chi_{r,\alpha_0}%
^{2})\nonumber\\
&  =\sum\limits_{v=0}^{\infty} C_{v}\left(  \boldsymbol{M}^{T}\left(
\boldsymbol{\eta}_{0}\right) \widetilde{\boldsymbol{d}}_{\epsilon
,(y_{t},\boldsymbol{x}_{t}),\alpha}(\boldsymbol{\eta}_{0}), \ \left[
\boldsymbol{M}^{T}\left(  \boldsymbol{\eta}_{0}\right)  \boldsymbol{\Sigma
}_{\alpha}\left(  \boldsymbol{\eta}_{0}\right)  \boldsymbol{M}\left(
\boldsymbol{\eta}_{0}\right)  \right]  ^{-1}\right)  P\left(  \chi_{r+2v}%
^{2}>\chi_{r,\alpha_0}^{2}\right)  .\nonumber
\end{align}

\end{proof}

\bigskip
Note that, substituting $\epsilon=0$ in Theorem \ref{THM:7asymp_power_one}, we
get an alternative expression for the asymptotic power function of our
proposed Wald-type test statistics under the contiguous alternatives
$\boldsymbol{\eta}_{n}=\boldsymbol{\eta}_{0}+n^{-1/2}\boldsymbol{d}$ as
\begin{equation}
\pi(\boldsymbol{\eta}_{n}) =\pi(\boldsymbol{\eta}_{n},0,(y_{t},\boldsymbol{x}%
_{t})) \cong\sum\limits_{v=0}^{\infty}C_{v}\left(  \boldsymbol{M}^{T}\left(
\boldsymbol{\eta}_{0}\right) \boldsymbol{d}, \left[  \boldsymbol{M}^{T}\left(
\boldsymbol{\eta}_{0}\right) \boldsymbol{\Sigma}_{\alpha}\left(
\boldsymbol{\eta}_{0}\right)  \boldsymbol{M}\left(  \boldsymbol{\eta}%
_{0}\right)  \right]  ^{-1}\right)  P\left(  \chi_{r+2v}^{2}>\chi_{r,\alpha_0
}^{2}\right)  .\nonumber
\end{equation}

Further, substituting $\boldsymbol{d}=\boldsymbol{0}_{r}$ in Theorem
\ref{THM:7asymp_power_one}, we can derive the asymptotic distribution of the
Wald-type test statistics $W_{n}$ under $H_{n,\epsilon,(y_{t},\boldsymbol{x}%
_{t})}^{L}$ which is non-central chi-square with $r$ degrees of freedom and
non-centrality parameter
\[
\epsilon^{2}\mathcal{IF}((y_{t},\boldsymbol{x}_{t});T_{\alpha}%
,H_{\boldsymbol{\beta}_{0}})^{T} \boldsymbol{M}\left(  \boldsymbol{\eta}%
_{0}\right)  \left[  \boldsymbol{M}^{T}\left(  \boldsymbol{\eta}_{0}\right)
\boldsymbol{\Sigma}_{\alpha}\left(  \boldsymbol{\eta}_{0}\right)
\boldsymbol{M}\left(  \boldsymbol{\eta}_{0}\right)  \right]  ^{-1}
\boldsymbol{M}^{T}\left(  \boldsymbol{\eta}_{0}\right)  \mathcal{IF}%
((y_{t},\boldsymbol{x}_{t});T_{\alpha},H_{\boldsymbol{\beta}_{0}}).
\]
Therefore, the asymptotic level under contiguous contamination $H_{n,\epsilon
,(y_{t},\boldsymbol{x}_{t})}^{L}$ turns out to be
\begin{align}
&  \alpha(\epsilon,(y_{t},\boldsymbol{x}_{t})) =\pi(\boldsymbol{\eta}%
_{0},\epsilon,(y_{t},\boldsymbol{x}_{t}))\nonumber\\
&  \cong\sum\limits_{v=0}^{\infty}C_{v}\left(  \epsilon\boldsymbol{M}%
^{T}\left(  \boldsymbol{\eta}_{0}\right) \mathcal{IF}((y_{t},\boldsymbol{x}%
_{t});T_{\alpha},H_{\boldsymbol{\beta}_{0}}), \left[  \boldsymbol{M}%
^{T}\left(  \boldsymbol{\eta}_{0}\right) \boldsymbol{\Sigma}_{\alpha}\left(
\boldsymbol{\eta}_{0}\right)  \boldsymbol{M}\left(  \boldsymbol{\eta}%
_{0}\right)  \right]  ^{-1}\right)  P\left(  \chi_{r+2v}^{2}>\chi_{r,\alpha_0
}^{2}\right)  .\nonumber
\end{align}
Note that, as $\epsilon\rightarrow0$, $\alpha(\epsilon,(y_{t},\boldsymbol{x}%
_{t})) \rightarrow\alpha_{0}$, the nominal level of the test.

Using the above expressions for asymptotic power and level under contiguous
contamination, one can easily derive the PIF and LIF of the proposed Wald-type
test statistics as described in the following theorem.

\begin{theorem}
\label{Theorem10} Assume the conditions of Theorem \ref{THM:7asymp_power_one}
hold. Then, the power and level influence functions of our proposed
Wald-type tests based on $W_{n}$ is given by
\begin{equation}
\mathcal{PIF}((y_{t},\boldsymbol{x}_{t}),W_{n},H_{\boldsymbol{\beta}_{0}})
\cong K_{r}^{\ast}\left(  \boldsymbol{P}\mathbf{d}\right)  ~ \boldsymbol{P}%
\cdot\mathcal{IF}((y_{t},\boldsymbol{x}_{t}),\boldsymbol{T}_{\alpha
},H_{\boldsymbol{\beta}_{0}}), \label{EQ:7PIF_simpleTest1}%
\end{equation}
with $\boldsymbol{P} = \boldsymbol{d}^{T}\boldsymbol{M}\left( \boldsymbol{\eta
}_{0}\right)  \left[ \boldsymbol{M}^{T}\left( \boldsymbol{\eta}_{0}\right)
\boldsymbol{\Sigma}_{\alpha}\left( \boldsymbol{\eta}_{0}\right)
\boldsymbol{M}\left(  \boldsymbol{\eta}_{0}\right)  \right] ^{-1}%
\boldsymbol{M}^{T}\left(  \boldsymbol{\eta}_{0}\right) $ and
\[
K_{r}^{\ast}(s)=e^{-\frac{s}{2}}\sum\limits_{v=0}^{\infty}\frac{s^{v-1}
}{v!2^{v}}\left(  2v-s\right)  P\left(  \chi_{r+2v}^{2}>\chi_{r,\alpha_0}
^{2}\right)  ,
\]
and
\[
\mathcal{LIF}((y_{t},\boldsymbol{x}_{t}),W_{n},H_{\boldsymbol{\beta}_{0}})=0.
\]
Also, the level influence function of any higher order is also identically zero.
\end{theorem}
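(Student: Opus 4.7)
The plan is to derive both influence functions by straightforward differentiation with respect to $\epsilon$ of the asymptotic power expression for $\pi(\boldsymbol{\eta}_n,\epsilon,(y_t,\boldsymbol{x}_t))$ already established in Theorem~\ref{THM:7asymp_power_one},
\[
\pi(\boldsymbol{\eta}_n,\epsilon,(y_t,\boldsymbol{x}_t))\cong \sum_{v=0}^{\infty} C_v\!\left(\boldsymbol{M}^T(\boldsymbol{\eta}_0)\widetilde{\boldsymbol{d}}_{\epsilon,(y_t,\boldsymbol{x}_t),\alpha}(\boldsymbol{\eta}_0),\,\bigl[\boldsymbol{M}^T(\boldsymbol{\eta}_0)\boldsymbol{\Sigma}_\alpha(\boldsymbol{\eta}_0)\boldsymbol{M}(\boldsymbol{\eta}_0)\bigr]^{-1}\right)P(\chi^2_{r+2v}>\chi^2_{r,\alpha}).
\]
Since $\widetilde{\boldsymbol{d}}_{\epsilon,(y_t,\boldsymbol{x}_t),\alpha}(\boldsymbol{\eta}_0)=\boldsymbol{d}+\epsilon\,\mathcal{IF}((y_t,\boldsymbol{x}_t),T_\alpha,H_{\boldsymbol{\eta}_0})$, the only $\epsilon$-dependence enters through the first argument of each $C_v$, so the PIF reduces to a chain-rule computation, while the LIF arises as the specialization $\boldsymbol{d}=\boldsymbol{0}$ (which corresponds to $\boldsymbol{\eta}_n=\boldsymbol{\eta}_0$).

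For the PIF I would introduce the shorthands $\boldsymbol{A}=[\boldsymbol{M}^T\boldsymbol{\Sigma}_\alpha\boldsymbol{M}]^{-1}$, $\boldsymbol{t}(\epsilon)=\boldsymbol{M}^T\widetilde{\boldsymbol{d}}_{\epsilon,(y_t,\boldsymbol{x}_t),\alpha}(\boldsymbol{\eta}_0)$ and $s(\epsilon)=\boldsymbol{t}(\epsilon)^T\boldsymbol{A}\boldsymbol{t}(\epsilon)$. A direct differentiation of $C_v(s)=s^v e^{-s/2}/(v!\,2^v)$ gives $C_v'(s)=s^{v-1}(2v-s)e^{-s/2}/(v!\,2^{v+1})$, while $s(0)=\boldsymbol{P}\boldsymbol{d}$ and $s'(0)=2\boldsymbol{d}^T\boldsymbol{M}\boldsymbol{A}\boldsymbol{M}^T\mathcal{IF}((y_t,\boldsymbol{x}_t),T_\alpha,H_{\boldsymbol{\eta}_0})=2\boldsymbol{P}\cdot\mathcal{IF}((y_t,\boldsymbol{x}_t),T_\alpha,H_{\boldsymbol{\eta}_0})$ with $\boldsymbol{P}$ as defined in the theorem statement. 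Applying the chain rule term by term and factoring the common scalar $s'(0)$ out of the series collects the remaining sum into exactly $\tfrac{1}{2}K_r^*(\boldsymbol{P}\boldsymbol{d})$, which reproduces (\ref{EQ:7PIF_simpleTest1}).

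For the LIF, set $\boldsymbol{d}=\boldsymbol{0}$ in the same series (equivalently, start from the asymptotic level expression displayed just before the theorem). Then $\boldsymbol{t}(\epsilon)=\epsilon\,\boldsymbol{M}^T\mathcal{IF}((y_t,\boldsymbol{x}_t),T_\alpha,H_{\boldsymbol{\eta}_0})$ and
\[
s(\epsilon)=\epsilon^2\,\mathcal{IF}((y_t,\boldsymbol{x}_t),T_\alpha,H_{\boldsymbol{\eta}_0})^T\boldsymbol{M}\boldsymbol{A}\boldsymbol{M}^T\mathcal{IF}((y_t,\boldsymbol{x}_t),T_\alpha,H_{\boldsymbol{\eta}_0})
\]
is a purely quadratic, and in particular even, function of $\epsilon$. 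Each $C_v(s(\epsilon))$ is therefore an analytic even function of $\epsilon$, whence every odd-order derivative in $\epsilon$ vanishes at $\epsilon=0$; the first-order case yields $\mathcal{LIF}=0$, and the same parity argument delivers the stated vanishing for the higher-order level influence functions.

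The main technical obstacle I foresee is justifying the term-by-term differentiation of the infinite Poisson-type mixture representing the non-central $\chi^2$ tail. Uniform convergence on a neighborhood of $\epsilon=0$ is needed; this follows from dominated convergence using the decay factor $e^{-s/2}$ together with a uniform bound on $s(\epsilon)$ over a compact $\epsilon$-neighborhood, or alternatively from the analyticity of the non-central $\chi^2$ tail in its non-centrality parameter. A minor bookkeeping subtlety is the $v=0$ term in $K_r^*$, where the formal factor $s^{-1}$ cancels with the accompanying $(2v-s)=-s$ to give the well-defined contribution $-e^{-s/2}P(\chi^2_r>\chi^2_{r,\alpha})$.
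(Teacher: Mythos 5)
Your approach coincides with the paper's: the authors also prove this theorem by differentiating the series expression for $\pi(\boldsymbol{\eta}_{n},\epsilon,(y_{t},\boldsymbol{x}_{t}))$ from Theorem \ref{THM:7asymp_power_one} with respect to $\epsilon$ via the chain rule (they merely cite the analogous Theorem 8 of Ghosh et al., 2016, instead of displaying the computation). Your explicit steps are correct and in fact more complete than the paper's: $C_{v}'(s)=s^{v-1}(2v-s)e^{-s/2}/(v!\,2^{v+1})$, $s(0)=\boldsymbol{P}\boldsymbol{d}$, $s'(0)=2\boldsymbol{P}\cdot\mathcal{IF}((y_{t},\boldsymbol{x}_{t}),\boldsymbol{T}_{\alpha},H_{\boldsymbol{\eta}_{0}})$, and the factor $\tfrac{1}{2}K_{r}^{\ast}(\boldsymbol{P}\boldsymbol{d})$ collected from the series together reproduce (\ref{EQ:7PIF_simpleTest1}); the specialization $\boldsymbol{d}=\boldsymbol{0}$ gives $\mathcal{LIF}=0$; and your remarks on term-by-term differentiation and the $v=0$ term are sensible.

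One caveat concerns the final clause of the statement. Your parity argument shows only that the \emph{odd}-order derivatives of $\alpha(\epsilon,(y_{t},\boldsymbol{x}_{t}))$ vanish at $\epsilon=0$; it does not establish that ``the level influence function of any higher order is also identically zero.'' Indeed, with $s(\epsilon)=c\epsilon^{2}$ and $c=\mathcal{IF}^{T}\boldsymbol{M}\boldsymbol{A}\boldsymbol{M}^{T}\mathcal{IF}\geq 0$ (your notation), the second derivative at $\epsilon=0$ equals
\begin{equation*}
c\left[P\left(\chi_{r+2}^{2}>\chi_{r,\alpha}^{2}\right)-P\left(\chi_{r}^{2}>\chi_{r,\alpha}^{2}\right)\right],
\end{equation*}
which is generically strictly positive. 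So either that clause must be read as referring to odd orders only, or it needs an argument beyond parity that neither you nor the paper supplies; for the two influence functions actually displayed in the theorem, however, your derivation is complete and matches the paper's route.
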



\begin{proof}
The proof follows by differentiating the expression of $\pi(\boldsymbol{\eta
}_{n},\epsilon,(y_{t},\boldsymbol{x}_{t}))$ from Theorem
\ref{THM:7asymp_power_one} with respect to $\epsilon$ using the chain rule and
is similar to that of Theorem 8 of Ghosh et al.~(2016).

\end{proof}

Note that the above theorem implies the stability of the asymptotic level of
our proposed Wald-type tests with respect to the infinitesimal contamination for any
$\alpha\geq0$. On the other hand the power influence function is bounded
implying the stability of the asymptotic contiguous power only when the
influence function of the MDPDE is bounded, i.e., for $\alpha>0$. The PIF of
the classical Wald-type test based on MLE (at $\alpha=0$) is unbounded
indicating its well-known non-robust nature.

\section{Application: Poisson Regression Model under Random Design\label{sec5}%
}

Poisson regression is a very popular member of the class of GLMs where the underlying
distribution, given by the density $f(y,\boldsymbol{x}^{T}\boldsymbol{\beta},\phi)=f_{P}%
(y,\boldsymbol{x}^{T}\boldsymbol{\beta})$, is  Poisson with mean
$E(Y|\boldsymbol{x}) = e^{\boldsymbol{x}^{T}\boldsymbol{\beta}}$,
so that
\begin{align}
f_{P}(y,\boldsymbol{x}^{T}\boldsymbol{\beta}) = \frac{e^{y(\boldsymbol{x}%
^{T}\boldsymbol{\beta})}}{y!}e^{-e^{\boldsymbol{x}^{T}\boldsymbol{\beta}}},
~~~y=0,1,2, \ldots.\nonumber
\end{align}
Hence, in terms of the general model density given in Equation (\ref{1}), we have $\phi=1$ and $\boldsymbol{\theta
}= \boldsymbol{x}_{i}^{T}\boldsymbol{\beta}$, $b(\boldsymbol{\theta}) =
e^{\boldsymbol{\theta}}$, 
and the link function $g$ is the natural logarithm function.
Also, note that $V(Y|\boldsymbol{x}) = E(Y|\boldsymbol{x}) = e^{\boldsymbol{x}%
^{T}\boldsymbol{\beta}}$. Additionally, we assume that the covariates
$\boldsymbol{X}$ are random having distribution function $G(\boldsymbol{x})$, which is
generally normal for continuous covariates. This regression model is widely
used in practice for modeling count data like total number of occurrences of a
particular disease in medical sciences, number of failures in reliability or
survival analysis, etc.

Note that, as $\phi=1$ known for the case of Poisson regression the parameter
of interest is $\boldsymbol{\eta}=\boldsymbol{\beta}$. The MDPDE of
$\boldsymbol{\beta}$ can then be obtained by solving only one (unbiased)
estimating equation (\ref{3}) which has the simplified form for Poisson
regression as
\begin{equation}
\sum_{i=1}^{n} \left[  \gamma_{1,\alpha}(\boldsymbol{x}_{i}) -\left(  y_{i}-
e^{\boldsymbol{x}_{i}^{T}\boldsymbol{\beta}}\right)  f_{P}^{\alpha}%
(y_{i},\boldsymbol{x}_{i}^{T}\boldsymbol{\beta})\right]  \boldsymbol{x}_{i}
=\boldsymbol{0}, \label{EQ:est_eqn_PoissReg}%
\end{equation}
where $\gamma_{1,\alpha}(\boldsymbol{x}) = \displaystyle\sum_{y=0}^{\infty} (y
- e^{\boldsymbol{x}^{T}\boldsymbol{\beta}}) f_{P}^{1+\alpha}(y,\boldsymbol{x}%
^{T}\boldsymbol{\beta}).$ For the particular case of $\alpha=0$, we have
$\gamma_{1,0}(\boldsymbol{x}) =0$ and hence this estimating equation further
simplifies to
\begin{equation}
\sum_{i=1}^{n} \left(  y_{i}- e^{\boldsymbol{x}_{i}^{T}\boldsymbol{\beta}%
}\right)  \boldsymbol{x}_{i} = \boldsymbol{0},
\end{equation}
which is nothing but the likelihood score equation of the  maximum likelihood estimator (MLE)
of $\boldsymbol{\beta}$.

Now, the asymptotic distribution of the MDPDE $\widehat{\boldsymbol{\beta}%
}_{\alpha}$ of $\boldsymbol{\beta}$ can be derived directly from the results of  Section
\ref{sec2.1}. In particular, under the model distribution with true parameter
value $\boldsymbol{\beta}_{0}$, we have 
\[
\sqrt{n}(\widehat{\boldsymbol{\beta}}_{\alpha}-\boldsymbol{\beta}_{0})
\underset{n\rightarrow\infty}{\longrightarrow} \mathcal{N}(\boldsymbol{0}%
,\boldsymbol{J}_{\alpha}^{-1}(\boldsymbol{\beta}_{0})\boldsymbol{K}_{\alpha
}(\boldsymbol{\beta}_{0}) \boldsymbol{J}_{\alpha}^{-1}(\boldsymbol{\beta}%
_{0})),
\]
where we now have $\boldsymbol{K}_{\alpha}(\boldsymbol{\beta}) ={\int%
_{\mathcal{X}}}\left(  \gamma_{11,2\alpha}(\boldsymbol{x})-\gamma_{1,\alpha
}^{2}(\boldsymbol{x})\right)  \boldsymbol{x}\boldsymbol{x}^{T}%
dG(\boldsymbol{x})$ and $\boldsymbol{J}_{\alpha}(\boldsymbol{\beta})
={\int_{\mathcal{X}}}\gamma_{11,\alpha}(\boldsymbol{x})\boldsymbol{x}%
\boldsymbol{x}^{T}dG(\boldsymbol{x}) $
with $\gamma_{11,\alpha}(\boldsymbol{x}) = \displaystyle\sum_{y=0}^{\infty} (y
- e^{\boldsymbol{x}^{T}\boldsymbol{\beta}})^{2} f_{P}^{1+\alpha}%
(y,\boldsymbol{x}^{T}\boldsymbol{\beta}).$ At $\alpha=0$, one can show that
$\gamma_{11,\alpha}(\boldsymbol{x})=e^{\boldsymbol{x}^{T}\boldsymbol{\beta}}$
and hence $\boldsymbol{K}_{\alpha}(\boldsymbol{\beta})= \boldsymbol{J}%
_{\alpha}(\boldsymbol{\beta}) ={\int_{\mathcal{X}}}e^{\boldsymbol{x}%
^{T}\boldsymbol{\beta}}\boldsymbol{x}\boldsymbol{x}^{T}dG(\boldsymbol{x})$,
which is exactly the Fisher information matrix under the present set-up generating
the asymptotic distribution of the MLE $\widehat{\boldsymbol{\beta}}_{0}$.
Based on these asymptotic distributions, one can compute the asymptotic
relative efficiencies of our MDPDEs over $\alpha$ which are presented in Table
\ref{TAB:ARE} for the case of a scalar ($k=1$) normally distributed covariate
$\boldsymbol{x}$. Clearly, as expected from the literature of the
MDPDE in any other model, the ARE decreases slightly as $\alpha$ increases but
this loss in efficiency is not substantial at small positive $\alpha$. And,
with this small price in asymptotic efficiency, we gain high robustness
properties of our MDPDEs with $\alpha>0$.

\begin{table}[h]
\caption{Asymptotic relative efficiency of MDPDEs of $\boldsymbol{\beta}$ over
$\alpha$ under a Poisson regression model with a scalar ($k=1$) covariate
$\boldsymbol{x}\sim N(\mu_{x}, 1)$  and different true parameter values
$\boldsymbol{\beta}_{0}$}%
\label{TAB:ARE}%
\centering
\begin{tabular}
[c]{|l|l|cccccccc|}\hline
$\mu_{x}$ & $\boldsymbol{\beta}_{0}$ & \multicolumn{8}{c|}{$\alpha$}\\
&  & 0 & 0.05 & 0.1 & 0.25 & 0.4 & 0.5 & 0.7 & 1\\\hline\hline
0 & 1 & 1.000 & 0.995 & 0.985 & 0.927 & 0.849 & 0.793 & 0.671 & 0.489\\
0 & 0.5 & 1.000 & 0.996 & 0.985 & 0.931 & 0.861 & 0.811 & 0.713 & 0.576\\
1 & 1 & 1.000 & 0.995 & 0.986 & 0.932 & 0.859 & 0.807 & 0.701 & 0.550\\
1 & 0.5 & 1.000 & 0.997 & 0.988 & 0.940 & 0.880 & 0.839 & 0.757 & 0.646\\
5 & 1 & 1.000 & 0.996 & 0.986 & 0.927 & 0.848 & 0.791 & 0.676 & 0.516\\
5 & 0.5 & 1.000 & 0.996 & 0.987 & 0.937 & 0.872 & 0.826 & 0.736 &
0.615\\\hline
\end{tabular}
\end{table}

To see such robustness advantages of our MDPDEs $\widehat{\boldsymbol{\beta}%
}_{\alpha}$, we consider the influence function of the MDPDE functional
${T}_{\alpha}^{\beta}$ of $\boldsymbol{\beta}$ from Section
\ref{sec2.2}. This influence function can be simplified for the present case
of Poisson regression model at the model distribution with parameter value
$\boldsymbol{\beta}$ as
\begin{align}
&  IF((y_{t}, \boldsymbol{x}_{t}), {T}_{\alpha}^{\beta}, H_{0}) =
\left(  {\int_{\mathcal{X}}}\gamma_{11,\alpha}(\boldsymbol{x})\boldsymbol{x}%
\boldsymbol{x}^{T}dG(\boldsymbol{x})\right)  ^{-1} \boldsymbol{x}_{t} \left[
\frac{(y_{t} - e^{\boldsymbol{x}_{t}^{T}\boldsymbol{\beta}})}{(y_{t}%
!)^{\alpha}} e^{\alpha\left[  y_{t}(\boldsymbol{x}_{t}^{T}\boldsymbol{\beta})
- e^{\boldsymbol{x}_{t}^{T}\boldsymbol{\beta}}\right]  } - \gamma_{1,\alpha
}(\boldsymbol{x}_{t})\right]  .\nonumber
\end{align}

Note that the above influence function is bounded at $\alpha>0$ and unbounded
at $\alpha=0$. This implies the robustness of the MDPDEs with $\alpha>0$ and
the non-robust nature of the MLE at $\alpha=0$. In particular, the influence
function of the MLE under the Poisson regression model is a straight line
(unbounded in both outliers in response, $y_{t}$, and leverage points in
covariate space, $\boldsymbol{x}_{t}$) and is given by
\[
IF((y_{t}, \boldsymbol{x}_{t}), {T}_{0}^{\beta}, H_{0}) = \left(
{\int_{\mathcal{X}}}e^{\boldsymbol{x}^{T}\boldsymbol{\beta}}\boldsymbol{x}%
\boldsymbol{x}^{T}dG(\boldsymbol{x})\right) ^{-1} \boldsymbol{x}_{t} (y_{t} -
e^{\boldsymbol{x}_{t}^{T}\boldsymbol{\beta}}).
\]
Figure \ref{FIG:IF_MDPDE} presents these influence functions for different $\alpha$, 
when $\boldsymbol{x}$ is a scalar ($k=1$) continuous variable 
having a normal distribution. Note that the influence function of
the classical Wald test at $\alpha=0$ is unbounded for $y_{t}%
\rightarrow\infty$ for any fixed $\boldsymbol{x}_{t}$ (outlier in response) as
well as for $\boldsymbol{x}_{t} \rightarrow\infty$ with small $y_{t}$ or
$\boldsymbol{x}_{t} \rightarrow-\infty$ with larger $y_{t}$ (leverage points).
On the contrary, influence functions of the MDPDEs with $\alpha>0$ are bounded
in both the cases indicating their robustness against outliers in both $y$ and
$x$-spaces. Also, the nature of the influence function (and hence robustness
of the corresponding estimators) remains invariant with respect any change in
the covariate mean $\mu_{x}$ (only the magnitude of the influence function
changes). Further, the supremum of the IF in absolute value decreases as
$\alpha$ increases, indicating the increasing robustness of the MDPDEs with
increasing $\alpha$.

\begin{figure}[!h]
\centering
\subfloat[$\mu_x=0,~\alpha=0$]{
		\includegraphics[width=0.4\textwidth]{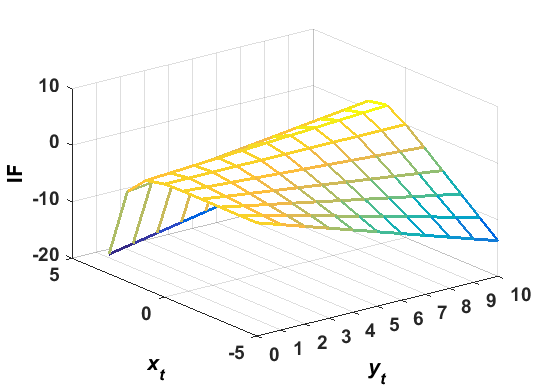}
		\label{FIG:1aaa}} ~
\subfloat[$\mu_x=0,~\alpha=0.1$]{
		\includegraphics[width=0.4\textwidth]{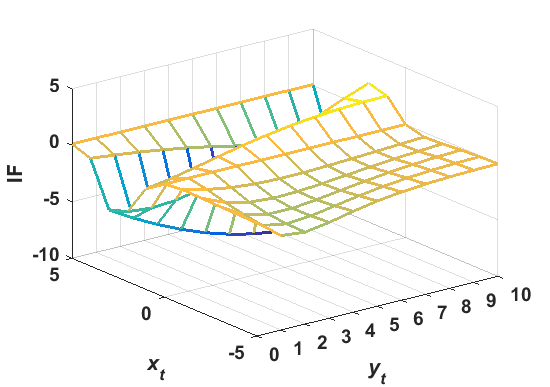}
		\label{FIG:1ba}} ~
\newline
\subfloat[$\mu_x=0,~\alpha=0.25$]{
		\includegraphics[width=0.4\textwidth]{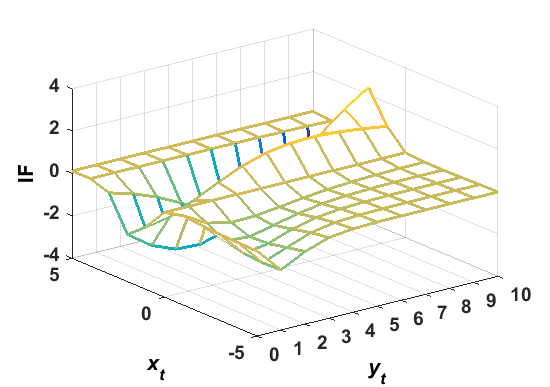}
		\label{FIG:1ca}} ~
\subfloat[$\mu_x=0,~\alpha=0.5$]{
		\includegraphics[width=0.4\textwidth]{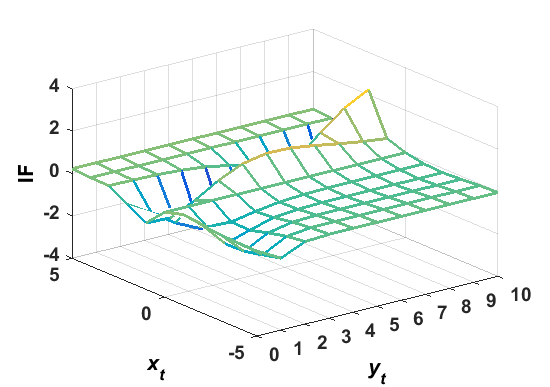}
		\label{FIG:1da}} ~
\caption{Influence function of MDPDE $\boldsymbol{T}_{\alpha}^{\beta}$ for a
Poisson regression model with $k=1$, true parameter $\boldsymbol{\beta}=1$ and
covariate $\boldsymbol{x}\sim N(\mu_{x},1)$.
The figures for other values of $\mu_x$ are similar and hence not reported for brevity}
\label{FIG:IF_MDPDE}%
\end{figure}

Now, consider the problem of testing the general linear hypothesis of
$\boldsymbol{\beta}$ under the Poisson regression model, i.e., consider the
hypothesis
\begin{equation}
H_{0}:\boldsymbol{L}\boldsymbol{\beta}=\boldsymbol{l}_{0} \text{ versus }%
H_{1}:\boldsymbol{L}\boldsymbol{\beta}\neq\boldsymbol{l}_{0}, \label{3.10}%
\end{equation}
where $\boldsymbol{L}$ is a full rank matrix of order $r\times k$, with $r<k$
($\mathrm{rank}\left(  \boldsymbol{L}\right)  =r$), and $\boldsymbol{l}_{0}$
is an $r$-dimensional vector, both of known values. We assume that
$\mathrm{rank}\left(  \boldsymbol{L},\boldsymbol{l}_{0}\right)  =r$. This
clearly belongs to the general class of hypothesis considered in
(\ref{EQ:hyp}) with $\boldsymbol{m}(\boldsymbol{\eta}) = \boldsymbol{m}%
(\boldsymbol{\beta}) = \boldsymbol{L}\boldsymbol{\beta}-\boldsymbol{l}_{0}$
and $\boldsymbol{M}(\boldsymbol{\beta}) = \boldsymbol{L}^{T}$ (since $\phi=1$
is known here). Then the proposed MDPDE based Wald-type test statistics for
testing (\ref{3.10}) is given by
\begin{equation}
W_{n}(\widehat{\boldsymbol{\beta}}_{\alpha}) =n(\boldsymbol{L}%
\widehat{\boldsymbol{\beta}}_{\alpha}-\boldsymbol{l}_{0})^{T} \left[
\boldsymbol{L}\boldsymbol{\Sigma}_{\alpha}(\widehat{\boldsymbol{\beta}%
}_{\alpha})\boldsymbol{L}^{T}\right] ^{-1} (\boldsymbol{L}%
\widehat{\boldsymbol{\beta}}_{\alpha}-\boldsymbol{l}_{0}). \label{3.11}%
\end{equation}
By Theorem \ref{THM:asymp_null}, under $H_{0}$, the above Wald-type test statistics
$W_{n}(\widehat{\boldsymbol{\beta}}_{\alpha})$ asymptotically follow a 
$\chi_{r}^{2}$ distribution. The tests are also consistent at any fixed
alternative from Corollary \ref{CORR:consistency}. We will now derive their 
asymptotic power under contiguous alternatives $H_{1,n}: \boldsymbol{\beta
}_{n} = \boldsymbol{\beta}_{0} + n^{-1/2}\boldsymbol{d}$, where
$\boldsymbol{\beta}_{0}$ is the true null parameter value satisfying
$\boldsymbol{L}\boldsymbol{\beta}_{0}=\boldsymbol{l}_{0}$. From Theorem
\ref{THM:Asump_contg}, we get the asymptotic distribution of our Wald-type test
statistic $W_{n}(\widehat{\boldsymbol{\beta}}_{\alpha})$ to be a non-central
chi-square distribution with degrees of freedom $r$ and non-centrality
parameter $\boldsymbol{d}^{T}\boldsymbol{L}^{T} \left[ \boldsymbol{L}%
\boldsymbol{\Sigma}_{\alpha}(\boldsymbol{\eta}_{0})\boldsymbol{L}^{T}\right]
^{-1} \boldsymbol{L}\boldsymbol{d}$. Hence the asymptotic contiguous power can
be obtained from the distribution function of this non-central chi-square
distribution, which is presented in Table \ref{TAB:ContPower} for the case
$k=1$ with a normally distributed covariate $\boldsymbol{x}$. One can clearly
observe that the asymptotic contiguous power for any fixed $\boldsymbol{d}$
decreases slightly with increasing $\alpha$, but the loss in power in not
quite significant. Notice the similarity with the nature of ARE of the
corresponding MDPDE $\widehat{\boldsymbol{\beta}}_{\alpha}$ from Table
\ref{TAB:ARE}, because the asymptotic contiguous power is directly related to
the asymptotic variance (and hence to the asymptotic efficiency) of the
estimator used.

\begin{table}[h]
\caption{Asymptotic power of the proposed Wald-type test statistics in
(\ref{3.11}) for testing (\ref{3.10}) under the contiguous alternatives
$H_{1,n}$ for a Poisson regression model with a scalar ($k=1$) covariate
$\boldsymbol{x}\sim N(\mu_{x}, 1)$ and different null parameter values
$\boldsymbol{\beta}_{0}$}%
\label{TAB:ContPower}
\centering
\begin{tabular}
[c]{|l|l|l|cccccccc|}\hline
$\boldsymbol{d}$ & $\mu_{x}$ & $\boldsymbol{\beta}_{0}$ &
\multicolumn{8}{c|}{$\alpha$}\\
&  &  & 0 & 0.05 & 0.1 & 0.25 & 0.4 & 0.5 & 0.7 & 1\\\hline\hline
1 & 0 & 1 & 0.445 & 0.443 & 0.440 & 0.418 & 0.389 & 0.368 & 0.320 & 0.247\\
1 & 0 & 0.5 & 0.236 & 0.235 & 0.233 & 0.222 & 0.209 & 0.200 & 0.181 & 0.156\\
1 & 1 & 1 & 0.998 & 0.998 & 0.997 & 0.996 & 0.993 & 0.990 & 0.979 & 0.943\\
1 & 1 & 0.5 & 0.669 & 0.667 & 0.663 & 0.642 & 0.613 & 0.593 & 0.550 & 0.486\\
1 & 5 & 1 & 1.000 & 1.000 & 1.000 & 1.000 & 1.000 & 1.000 & 1.000 & 1.000\\
1 & 5 & 0.5 & 1.000 & 1.000 & 1.000 & 1.000 & 1.000 & 1.000 & 1.000 &
1.000\\\hline
2 & 0 & 1 & 0.954 & 0.953 & 0.951 & 0.939 & 0.919 & 0.900 & 0.847 & 0.721\\
2 & 0 & 0.5 & 0.696 & 0.695 & 0.690 & 0.665 & 0.632 & 0.606 & 0.551 & 0.467\\
2 & 1 & 1 & 1.000 & 1.000 & 1.000 & 1.000 & 1.000 & 1.000 & 1.000 & 1.000\\
2 & 1 & 0.5 & 0.998 & 0.998 & 0.997 & 0.996 & 0.994 & 0.992 & 0.986 &
0.971\\\hline
\end{tabular}
\end{table}

As in the case of  the MDPDE, we indeed gain high robustness of the proposed
Wald-type test statistics with $\alpha>0$ at a small cost  in
asymptotic contiguous power. To see this, we consider the influence function
analysis for the Poisson regression model following the general theory
developed in Section \ref{sec4}. In particular, the first order influence
function of the Wald-type test statistics is always zero and corresponding
second order influence function for testing (\ref{3.10}) under the Poisson
regression model at the null distribution $H_{0}$ with true parameter value
$\boldsymbol{\beta}_{0}$ simplifies to
\begin{align}
&  \mathcal{IF}_{2}((y_{t},\boldsymbol{x}_{t}),W_{\alpha},H_{0})\nonumber\\
& = \left\{ \boldsymbol{x}_{t}^{T}\boldsymbol{J}_{\alpha}(\boldsymbol{\beta
}_{0})^{-1}\boldsymbol{L}^{T} \left[ \boldsymbol{L}\boldsymbol{\Sigma}%
_{\alpha}(\boldsymbol{\beta}_{0})\boldsymbol{L}^{T}\right] ^{-1}
\boldsymbol{L}\boldsymbol{J}_{\alpha}(\boldsymbol{\beta}_{0})^{-1}%
\boldsymbol{x}_{t}\right\}  \left[  \frac{(y_{t} - e^{\boldsymbol{x}_{t}%
^{T}\boldsymbol{\beta}_{0}})}{(y_{t}!)^{\alpha}} e^{\alpha\left[
y_{t}(\boldsymbol{x}_{t}^{T}\boldsymbol{\beta}_{0}) - e^{\boldsymbol{x}%
_{t}^{T}\boldsymbol{\beta}_{0}}\right] } - \gamma_{1,\alpha}^{(0)}%
(\boldsymbol{x}_{t})\right] ^{2},\nonumber
\end{align}
with $\gamma_{1,\alpha}^{(0)}(\boldsymbol{x}) = \displaystyle\sum
_{y=0}^{\infty} (y - e^{\boldsymbol{x}^{T}\boldsymbol{\beta}_{0}})
f_{P}^{1+\alpha}(y,\boldsymbol{x}^{T}\boldsymbol{\beta}_{0}).$ Similarly,
while considering the level and power robustness of the proposed Wald-type
test statistics for testing (\ref{3.10}) under Poisson regression model, the
LIF is always zero from Theorem \ref{Theorem10} and the PIF at the null
distribution $H_{0}$ simplifies to
\begin{align}
\mathcal{PIF}((y_{t},\boldsymbol{x}_{t}),W_{n},H_{0})\cong K_{r}^{\ast}\left(
\boldsymbol{P}\mathbf{d}\right)  \boldsymbol{P}\boldsymbol{J}_{\alpha
}(\boldsymbol{\beta}_{0})^{-1}\boldsymbol{x}_{t} \left[  \frac{(y_{t} -
e^{\boldsymbol{x}_{t}^{T}\boldsymbol{\beta}_{0}})}{(y_{t}!)^{\alpha}}
e^{\alpha\left[  y_{t}(\boldsymbol{x}_{t}^{T}\boldsymbol{\beta}_{0}) -
e^{\boldsymbol{x}_{t}^{T}\boldsymbol{\beta}_{0}}\right] } - \gamma_{1,\alpha
}^{(0)}(\boldsymbol{x}_{t})\right],
\end{align}
where now we have $\boldsymbol{P}= \boldsymbol{d}^{T}\boldsymbol{L}^{T} \left[
\boldsymbol{L}\boldsymbol{\Sigma}_{\alpha}(\boldsymbol{\beta}_{0}%
)\boldsymbol{L}^{T}\right] ^{-1}\boldsymbol{L}$ and $K_{r}^{\ast}(\cdot)$ is
as defined in Theorem \ref{Theorem10}. Note that, both the second order
influence function of the Wald-type test statistics and its power influence
function are bounded for $\alpha>0$ implying robustness of our proposal. On
the other hand, both are unbounded at $\alpha=0$ demonstrating the well-known
non-robust nature of the classical Wald test. Figures \ref{FIG:IF2_test} and
\ref{FIG:PIF_test}, respectively, present these influence functions for the
Poisson regression case with $k=1$ and a normally distributed covariate. Note
that these influence functions are, respectively, a quadratic and a
linear function of the corresponding influence function of the MDPDE (illustrated in Figure \ref{FIG:IF_MDPDE}) used in
constructing the Wald-type test statistics and demonstrate (appropriately transformed) bounded behavior.   In particular, their redescending nature with respect to increasing
$\alpha$ is clearly seen from the figures which implies that the robustness of
our proposed Wald-type test statistics increases with increasing $\alpha>0$.

\begin{figure}[h]
\centering
\subfloat[$\alpha=0$]{
		\includegraphics[width=0.4\textwidth]{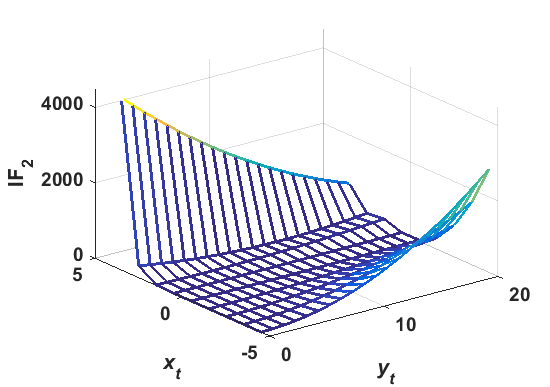}
		\label{FIG:2aa}} ~
\subfloat[$\alpha=0.1$]{
		\includegraphics[width=0.4\textwidth]{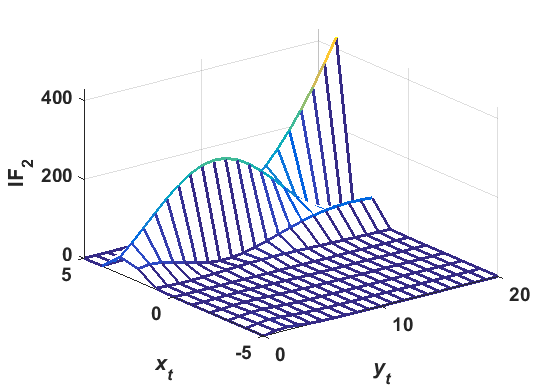}
		\label{FIG:2ab}}\newline\subfloat[$\alpha=0.25$]{
		\includegraphics[width=0.4\textwidth]{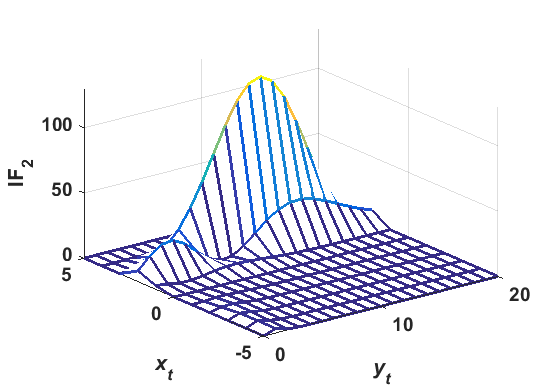}
		\label{FIG:2ba}} ~
\subfloat[$\alpha=0.5$]{
		\includegraphics[width=0.4\textwidth]{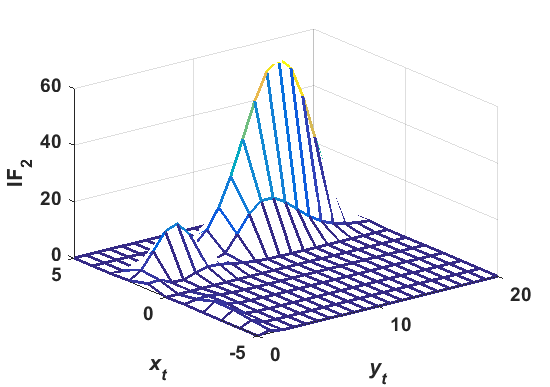}
		\label{FIG:2bb}}
\caption{Second order Influence function of the Wald-type test statistics for
testing (\ref{3.10}) for a Poisson regression model with $k=1$,
$\boldsymbol{L}=1$, true null parameter $\boldsymbol{\beta}_{0}=1$ and
standard normal covariate $\boldsymbol{x}$}%
\label{FIG:IF2_test}%
\end{figure}

\begin{figure}[h]
\centering
\subfloat[$\alpha=0$]{
		\includegraphics[width=0.4\textwidth]{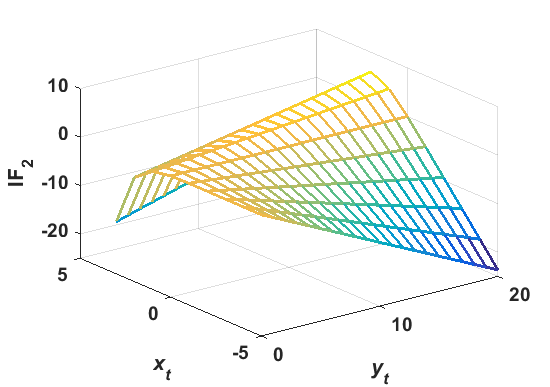}
		\label{FIG:3aa}} ~
\subfloat[$\alpha=0.1$]{
		\includegraphics[width=0.4\textwidth]{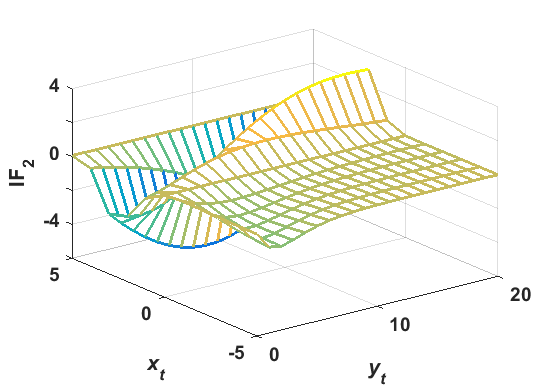}
		\label{FIG:3ab}}\newline\subfloat[$\alpha=0.25$]{
		\includegraphics[width=0.4\textwidth]{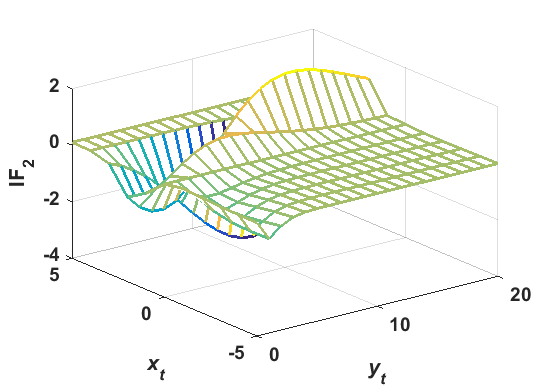}
		\label{FIG:3ba}} ~
\subfloat[$\alpha=0.5$]{
		\includegraphics[width=0.4\textwidth]{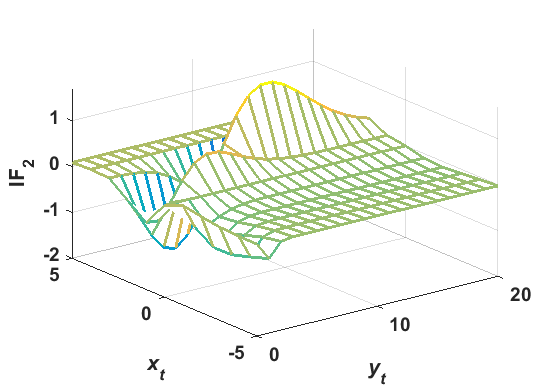}
		\label{FIG:3bb}}
\caption{Power Influence function of the Wald-type test statistics for testing
(\ref{3.10}) for a Poisson regression model with $k=1$, $\boldsymbol{L}=1$,
true null parameter $\boldsymbol{\beta}_{0}=1$, $\boldsymbol{d}=1$ and
standard normal covariate $\boldsymbol{x}$}%
\label{FIG:PIF_test}%
\end{figure}

\section{Simulation Study}
\label{sec6}

In this section, we will present some numerical illustrations for the finite
sample performance of our proposed Wald-type tests under the Poisson
regression model of the previous section through appropriate simulation results.
We start with empirical demonstration of their  robustness properties. 
We consider three explanatory variables  in this
study, so $\boldsymbol{X} = (X_{0}, X_{1}, X_{2}, X_3)^T$, where $X_{0}$ is a vector with all elements equal to one. 
The other three components of $\boldsymbol{X}$ are independently generated from the standard
normal distribution.
The response variable $Y$ is simulated from the Poisson distribution with mean parameter $\exp(\boldsymbol{X}^T \boldsymbol{\beta}_0)$.
The true value of the parameter is taken as
$\boldsymbol{\beta}_{0}=(1,0,0,0)^{T}$. We consider the null hypothesis as 
$H_{0}:(\beta_{1},\beta_{2},\beta_{3})^{T}=(0,0,0)^{T}$. Let us define $\boldsymbol{l_0}=( 0, 0, 0)^{T}$
and
\[
\boldsymbol{L}=\left(
\begin{array}
[c]{ccc}%
0 & 0 & 0\\
1 & 0 & 0\\
0 & 1 & 0\\
0 & 0 & 1
\end{array}
\right)  .
\]
Then the null hypothesis can be written as
$H_{0}: \boldsymbol{L}^{T} \boldsymbol{\beta} = \boldsymbol{l}_0$. 
According to the set up of the simulation the null hypothesis is true. 
So, at first, our interest is to check whether or not the observed levels of different Wald-type tests match with the nominal level at $\alpha_0 = 0.05$. 
The total number of replications is taken as 2000 in this study. 
Here the observed level is measured as the proportion of test statistics exceeding the corresponding chi-square critical value in 2000  replications. 
The results are given in Figure \ref{fig:simulation}(a) where the sample size $n$ varies from 20 to 200. We have used several Wald-type test
statistics, corresponding to different MDPDEs. 
The values of the DPD tuning parameter are taken to be $\alpha=0,\ 0.1,\ 0.2$ and $0.3$, and the  Wald-type test corresponding to $\alpha$ is denoted by $DPD(\alpha)$. As it is previously mentioned, $\alpha=0$ is the classical Wald test
for the Poisson regression model which uses the MLE. The horizontal line in the figure
represents the nominal level of 0.05. It is noticed that all  tests produce almost identical results -- they are slightly
 liberal for small sample sizes and lead to somewhat inflated
observed levels.  However, this discrepancy decreases rapidly as the sample size increases.

In the next simulation study we evaluate the stability of the level of the tests under contamination. So, we
repeated the tests for the same null hypothesis by adding 5\% outliers in the
data. For the outlying observations  the values of the response variable  were altered to $y=15$. Figure \ref{fig:simulation}(b) shows that the level
of the classical Wald test completely breaks down, whereas
Wald-type tests with $\alpha=0.2$ and $\alpha=0.3$ present 
stable levels. The performance of the Wald-type test with $\alpha=0.1$, though much more stable than the classical Wald test, is  relatively poor.

To investigate the power of the Wald-type tests we took the same null hypothesis, 
but changed the true data generating parameter to $\boldsymbol{\beta}^* = \boldsymbol{\beta}_0 - c \boldsymbol{1}_4$, where 
$c = 0.15$ and $\boldsymbol{1}_4$ is a unit vector of length 4. 
The rest of the set up as well as values of $\boldsymbol{L}$ and $\boldsymbol{l}_0$ remained unchanged from the first experiment. The
empirical power functions are calculated in the same manner as the levels of
the tests and plotted in Figure \ref{fig:simulation}(c). Here the classical Wald test is the
most powerful under pure data. However, the performances of other Wald-type tests are also practically as  good as the classical Wald test. Therefore, from Figures \ref{fig:simulation}(a) and (c) we notice that there is hardly any difference among these tests in pure data in terms of the level and power.

Finally, we calculated the power functions of the above
hypothesis  under contaminated data. The true data generating parameter is taken as $\boldsymbol{\beta}^* = \boldsymbol{\beta}_0 - c \boldsymbol{1}_4$, where 
$c = 0.15$ and 5\% of the data are contaminated with $y=15$. The observed
powers of the Wald-type tests are given in Figure \ref{fig:simulation}(d). All
Wald-type test statistics  show stable powers
under contamination, and those powers are almost unchanged as observed in Figure \ref{fig:simulation}(c). On the other hand, the classical Wald test exhibits a drastic loss in power. Notice that the
observed level of the classical Wald test is already very high (around 0.45) at contaminated data, so it is expected to produce a large power just because of the inflated level. But due to their outlier stability, the power of the classical Wald test does not increase with the sample size  at the same rate as the other robust tests. In fact, it shows a relatively significant drop over most of the range considered in our study.   On the whole, the proposed Wald-type test statistics corresponding to
moderately large $\alpha$ appear to be quite competitive to the classical
Wald test for pure data, but they are far better in terms of robustness
properties under contaminated data.

\begin{figure}[h]  \centering
\begin{tabular}{cc}
\vspace{-.9cm} (a) & (b)\\
\includegraphics[height=6cm, width=7cm]{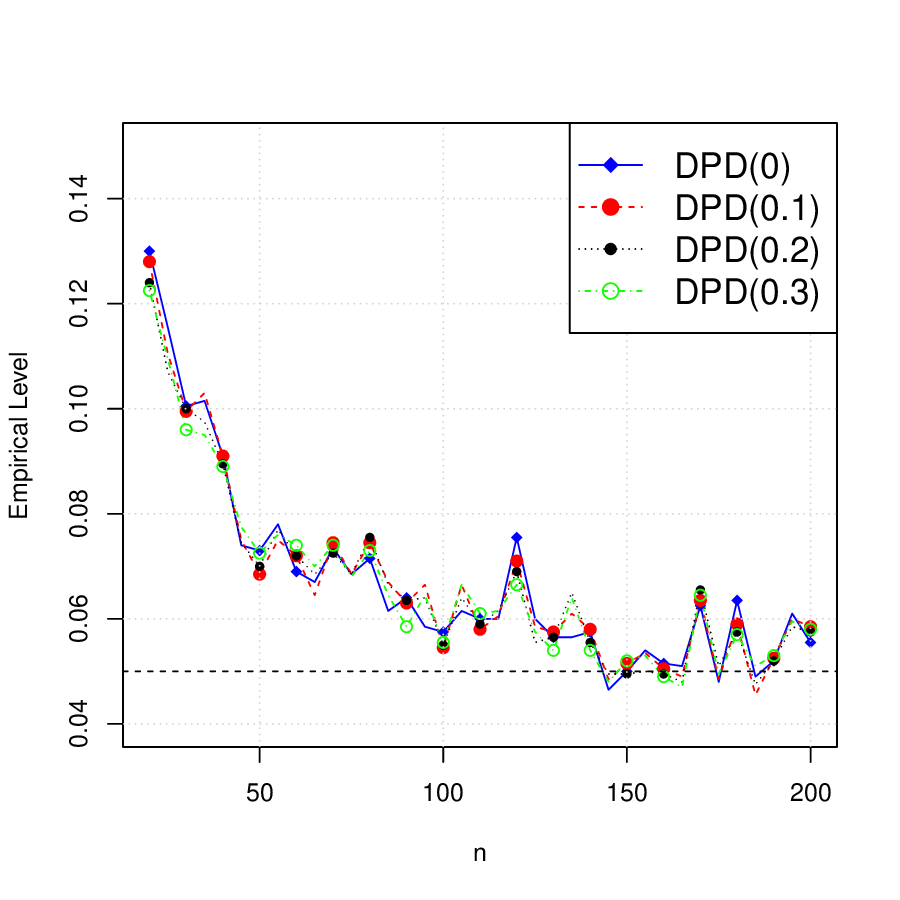} &
 \includegraphics[height=6cm, width=7cm]{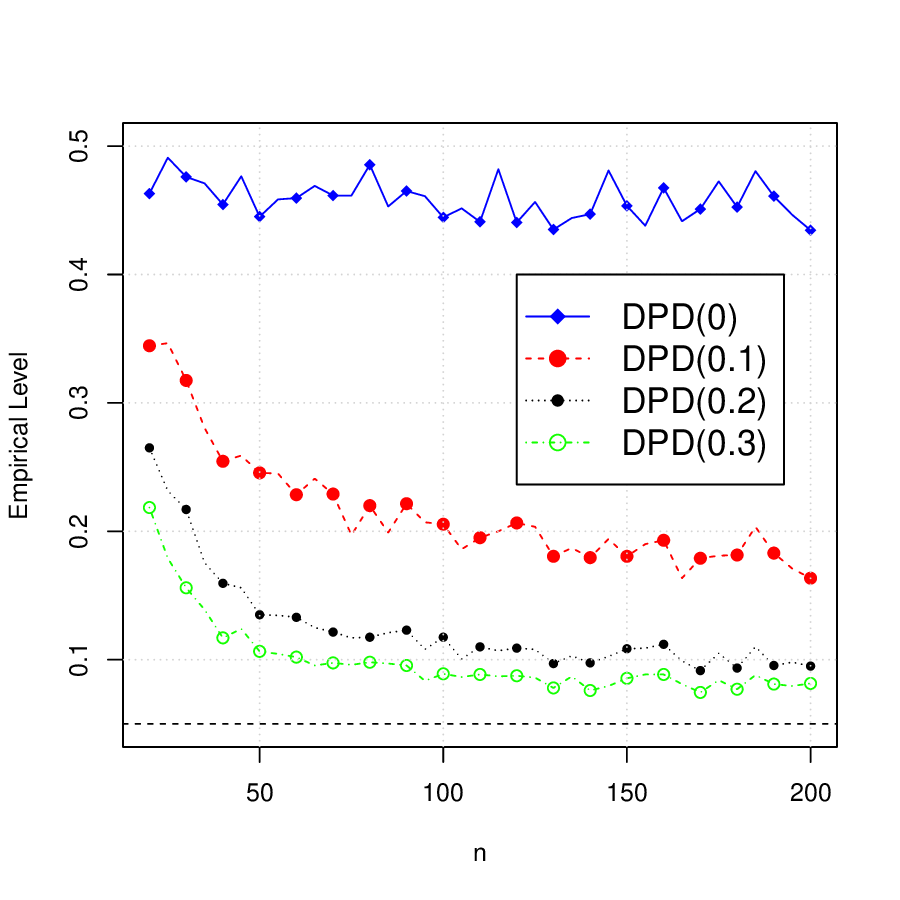}  \\
\vspace{-.9cm} (c) & (d)\\
 \includegraphics[height=6cm, width=7cm]{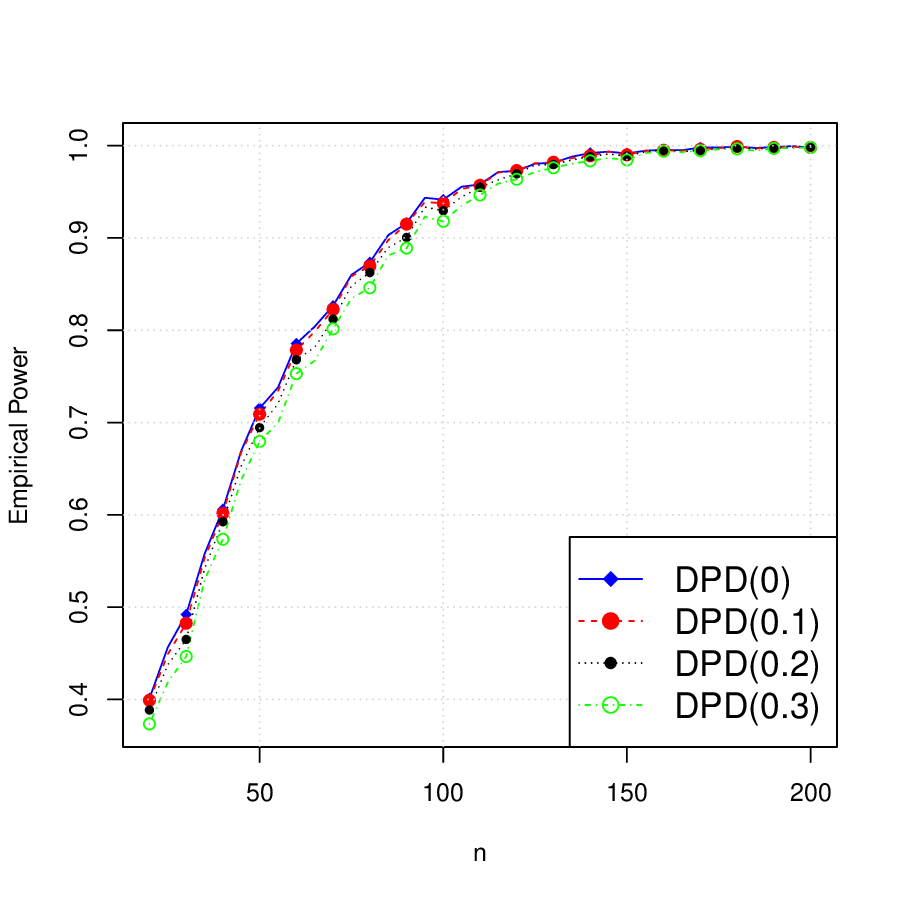} &
 \includegraphics[height=6cm, width=7cm]{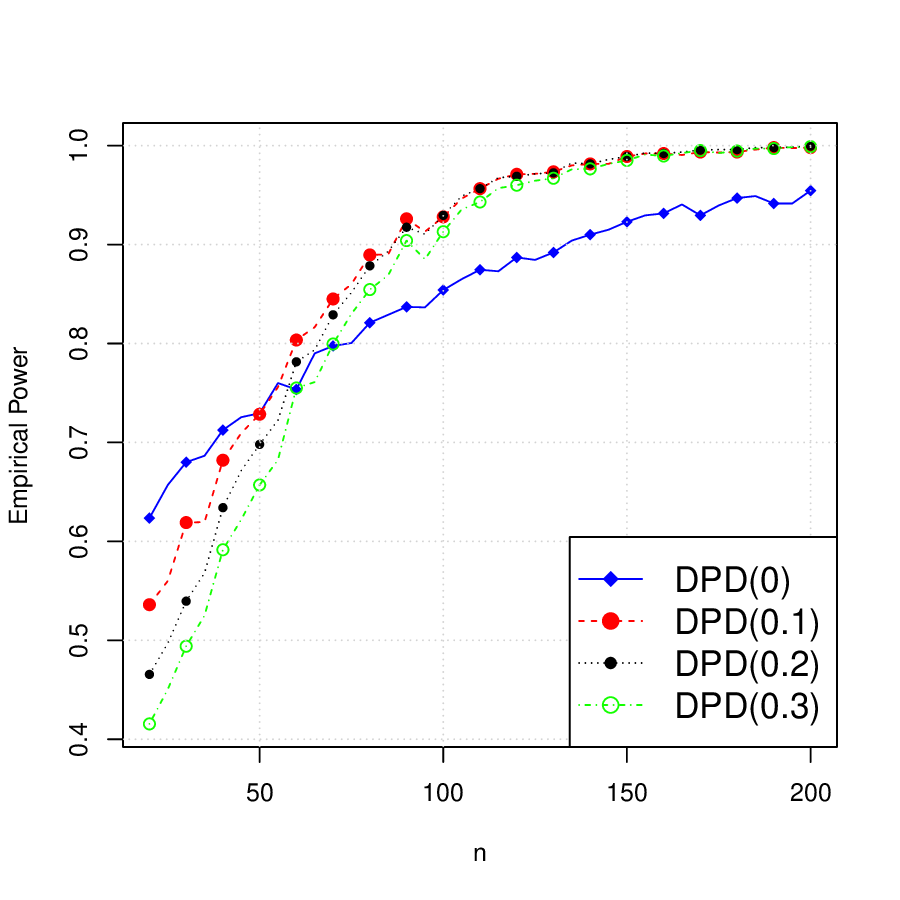}
\end{tabular}%
\caption{(a) Simulated levels of different tests for pure data; (b)
simulated levels of different tests for contaminated data; (c) simulated
powers of different tests for pure data; (d) simulated powers of different
tests for contaminated data.\label{fig:simulation}}
\end{figure}%

\begin{table}[h]
	\caption{The levels and powers of different Wald-type tests for sample size $n=100$, where $\epsilon$ is the proportion of contamination in the data. For power, the true parameter  $\boldsymbol{\beta}^* = \boldsymbol{\beta}_0 - c \boldsymbol{1}_{11}$, where $c = - 0.05$.}%
	\label{TAB:sim}
	\centering
	\begin{tabular}
		[c]{|c|c|cccc|}\hline
		Level or & $\epsilon$  &
		\multicolumn{4}{c|}{$\alpha$}\\
		Power &  & 0  & 0.1 & 0.2 & 0.3 \\ \hline
		Level & 0 & 0.056 &	0.049 &	0.044 &	0.043\\
		Power & 0 & 0.509 & 0.501 & 0.488 & 0.468\\
		Level & 0.05 & 0.989 &	0.077 &	0.064 &	0.070\\
		Power & 0.05 & 0.994 &	0.538 &	0.522 &	0.496\\
		\hline
	\end{tabular}
\end{table}

\begin{table}[h]
	\caption{The powers of the Wald-type tests for different sample sizes. The true parameter  $\boldsymbol{\beta}^* = \boldsymbol{\beta}_0 - c \boldsymbol{1}_{11}$, where $c = - 0.02$. }%
	\label{TAB:sim3}
	\centering
	\begin{tabular}
		[c]{|c|cccc|}\hline
		&\multicolumn{4}{c|}{$\alpha$}\\
		$n$  & 0  & 0.1 & 0.2 & 0.3 \\ \hline
		100 & 0.084 & 0.063 & 0.064 & 0.071\\
		200 & 0.165 & 0.120 & 0.127 & 0.136\\
		500 & 0.380 & 0.353 & 0.341 & 0.323\\
		1000 & 0.704 & 0.808 & 0.786 & 0.752\\
		1500 & 0.906 & 0.967 & 0.959 & 0.959\\
		2000 & 0.979 & 0.994 & 0.991 & 0.988\\
		3000 & 1.000 & 1.000 & 1.000 & 1.000
				
		\\\hline
	\end{tabular}
\end{table}

In the next set of simulation studies, we consider a more general set up to explore the performance of the proposed Wald-type tests. 
Here we have taken $k=10$;  the explanatory variables are generated independently from the standard normal distribution. 
To make the hypothesis general, we have arbitrarily chosen $k+1$ elements of vector $\boldsymbol{\beta}_0$,  
and $(k+1)\times k$ dimensional matrix $\boldsymbol{L}$. 
Each element of $\boldsymbol{\beta}_0$ and $\boldsymbol{L}$  is generated from an independent and identically distributed uniform distribution from $-1$ to 1. 
After that  $\boldsymbol{\beta}_0$ and $\boldsymbol{L}$ are kept unchanged throughout the simulation. 
Suppose $\boldsymbol{L}^{T} \boldsymbol{\beta}_0 = \boldsymbol{l}_0$. In the first simulation, 
$Y$ is generated from the Poisson distribution with mean parameter $\exp(\boldsymbol{X}^T \boldsymbol{\beta}_0)$;  
we are interested in verifying the levels of the Wald-type tests for testing the null hypothesis 
$H_0: \boldsymbol{L}^{T} \boldsymbol{\beta} = \boldsymbol{l}_0$. We have taken a sample of size $n=100$ and replicate it a 1000 times. 
The first row of Table \ref{TAB:sim} shows that the empirical levels of all four tests are closely bunched around the nominal level of $\alpha_0=0.05$. 
Next, we explore the powers of these tests when the true value of the parameter is in slight deviation from $\boldsymbol{\beta}_0$. 
We generated $Y$ from $\rm{Poisson}(exp(\boldsymbol{X}^T \boldsymbol{\beta}^*))$, 
where $\boldsymbol{\beta}^* = \boldsymbol{\beta}_0 - c \boldsymbol{1}_{11}$ with $c = - 0.05$. 
The results in the second row of Table \ref{TAB:sim} shows that the classical Wald test is the most powerful; 
however, other Wald-type tests  also produce very competitive powers. 
In Table \ref{TAB:sim3}, we expand the exploration of the study of power for pure data 
(as in the second row of Table \ref{TAB:sim}) over different sample sizes;  
the true parameter is taken very close to the null hypothesis where $\boldsymbol{\beta}^* = \boldsymbol{\beta}_0 - c \boldsymbol{1}_{11}$ with $c = - 0.02$. 
The result shows that the powers of all tests converge to one as sample size increases
indicating the consistency of the proposed tests.

To check the robustness properties of these tests, we contaminated $\epsilon$ proportion outliers in the $Y$ variable. 
Those outlying values are  25 standard deviations away from their respective means. 
The third row of Table \ref{TAB:sim} presents the empirical levels of the tests where there are 5\% outliers 
and for the rest of the data set  $Y \sim \rm{Poisson}(exp(\boldsymbol{X}^T \boldsymbol{\beta}_0))$. 
The classical Wald test shows an extreme inflation of level in this case, whereas other Wald-type tests show a stable level. 
In the same set up, we checked the powers of the tests under contamination where 95\% data are generated from  
$Y \sim \rm{Poisson}(exp(\boldsymbol{X}^T \boldsymbol{\beta}^*))$. 
The powers of the robust Wald-type tests are very similar to the corresponding uncontaminated case. 
So, it shows that 5\% contamination does not significantly affect the powers of these tests. 
Although, the observed power of the classical Wald test is very high, it is merely because of its inflated level. 
In fact, we could check that the actual level-corrected power is very poor in this situation.


While we have primarily used the influence function for the description of the robustness of our proposed tests, 
there are several other possible measures of robustness of statistical procedures. 
The breakdown point, which quantifies the degree of contamination that the procedure can withstand before it becomes completely uninformative, is one of them. 
Here we empirically explore the breakdown properties of our tests. 
In Table \ref{TAB:sim2}, the level robustness of the Wald type tests are demonstrated. 
The contamination scheme is as in the third row of Table \ref{TAB:sim}, but the contamination proportion is slowly allowed to increase to 0.5. 
Clearly the observed level for the ordinary Wald test is pushed to the maximum possible value at fairly small levels of contamination, 
but for moderately large values of $\alpha$ the observed levels remain substantially smaller than 1 even at $\epsilon = 0.5$.

\begin{table}[h]
	\caption{The levels of different Wald-type tests  for sample size $n=100$, where $\epsilon$ is the proportion of contamination in the data.}%
	\label{TAB:sim2}
	\centering
	\begin{tabular}
		[c]{|c|cccc|}\hline
		&\multicolumn{4}{c|}{$\alpha$}\\
		$\epsilon$  & 0  & 0.1 & 0.2 & 0.3 \\ \hline
		0 & 0.056 &	0.049 &	0.044 &	0.043\\
		0.05 & 0.989 &	0.077 &	0.064 &	0.070\\
		0.10 & 1.00 &	0.132 &	0.095 &	0.118\\
		0.15 & 1.00  & 0.222 & 0.163 & 0.180\\
		0.20 & 1.00 &	0.320 &	0.201 &	0.201\\
		0.25 & 1.00 & 0.445 & 0.288 & 0.287\\
		0.30 & 1.00 &	0.603 &	0.365 &	0.373\\
		0.35 & 1.00 & 0.731 & 0.461 & 0.475\\
		0.40 & 1.00 &	0.863 &	0.548 &	0.566\\
		0.45 & 1.00 & 0.928 & 0.627 & 0.647\\
		0.50 & 1.00 &	0.974 &	0.748 &	0.757		
		\\\hline
	\end{tabular}
\end{table}

Finally, we did a study on the effect of leverage points on the Wald-type tests. 
In the previous simulation studies, the explanatory variables are generated independently from the standard normal distribution. 
Now, $\epsilon$ proportion of explanatory variables in the samples (of size $n=100$) are generated independently from $N(\mu, 0.0001)$. 
The remaining set up of the simulation is same as the set up in the first row of Table \ref{TAB:sim}. 
Table \ref{TAB:sim4} shows the levels of the Wald-type tests for different values of $\epsilon$ and $\mu$. 
All simulated levels are very close to the nominal level of $\alpha_0=0.05$, 
so the result demonstrates that at least in this study these tests are robust against  leverage points. 

\begin{table}[h]
	\caption{The levels of the Wald-type tests for different contaminated proportions ($\epsilon$) and mean shift ($\mu$) for the leverage points.  The sample size is $n=100$.}%
	\label{TAB:sim4}
	\centering
	\begin{tabular}
		[c]{|c|c|cccc|}\hline
		&&\multicolumn{4}{c|}{$\alpha$}\\
		$\epsilon$ & $\mu$  & 0  & 0.1 & 0.2 & 0.3 \\ \hline
		0 & 0 & 0.056 &	0.049 &	0.044 &	0.043\\
		0.05 & 3 & 0.049 & 0.027 & 0.034 & 0.044\\
		0.05 & 4 & 0.054 & 0.041 & 0.041 & 0.040\\
		0.10 & 3 & 0.049 & 0.039 & 0.042 & 0.045\\
		0.10 & 4 & 0.044 & 0.040 & 0.043 & 0.050		
		\\\hline
	\end{tabular}
\end{table}

%

\section{Real Data Examples}

\subsection{Credit Cards Data}

As the first application of our proposed method, we consider a benchmark dataset from Agresti (2018),
which consists of a random sample from an Italian study conducted to investigate 
the relation of holding a travel credit card (such as Diners Club or American Express) with individual's personal income.
The data are given for 31 possible values of annual income (in millions of lira, the previous currency of Italy),
where the number of total persons sampled and the number of them having at least one card are recorded at each income level. 
These data have been traditionally analyzed through either logistic or Poisson regression models.

\begin{table}[!b]
	\caption{The MDPDEs of $\boldsymbol{\beta}$, their standard errors (in parenthesis) and the p-values of their significance testing
		obtained by the proposed MDPDE-based Wald-type tests.}%
	\label{TAB:card}\vspace{5pt}
	\centering
	\resizebox{\textwidth}{!}{
		\begin{tabular}{|l|ccccc|ccccc|}\hline
			& \multicolumn{5}{c}{MDPDE (standard error)}	& \multicolumn{5}{|c|}{p-values for significance testing}\\ \hline	
			$\alpha$ 			& 0  & 0.1 & 0.3 & 0.5 & 0.7 &  0 & 0.1 & 0.3 & 0.5 & 0.7\\ \hline
			\multicolumn{11}{|l|}{\textbf{Pure Data}}\\
			Intercept  			& $-$2.737	&	$-$2.274	&	$-$2.039	&	$-$2.019	&	$-$2.016	&	0.00001	&	0.00005	&	0.00032	&	0.00094	&	0.00186	\\
			($\beta_0$)			& (0.56)	&	(0.56)	&	(0.57)	&	(0.61)	&	(0.65)	&		&		&		&		&		\\
			Income 				& 0.021	&	0.018	&	0.017	&	0.015	&	0.015	&	0.00004	&	0.00045	&	0.00185	&	0.01170	&	0.02257	\\	
			($\beta_1$)			& (0.01)	&	(0.01)	&	(0.01)	&	(0.01)	&	(0.01)	&		&		&		&		&		\\
			\textit{LOG-CASE} 	& 1.215	&	1.051	&	0.940	&	1.028	&	0.999	&	0.00000	&	0.00002	&	0.00013	&	0.00010	&	0.00035	\\
			($\beta_2$)			& (0.24)	&	(0.24)	&	(0.25)	&	(0.26)	&	(0.28)	&		&		&		&		&		\\\hline
			\multicolumn{11}{|l|}{\textbf{With One Outlier}}\\
			Intercept 			& $-$0.708	&	$-$2.069	&	$-$2.040	&	$-$2.009	&	$-$2.022	&	0.10434	&	0.00014	&	0.00036	&	0.00101	&	0.00197	\\
			($\beta_0$)			& (0.44)	&	(0.54)	&	(0.57)	&	(0.61)	&	(0.65)	&		&		&		&		&		\\
			Income 				& 0.009	&	0.018	&	0.017	&	0.016	&	0.015	&	0.10846	&	0.00091	&	0.00301	&	0.00776	&	0.02832	\\
			($\beta_1$)			& (0.01)	&	(0.01)	&	(0.01)	&	(0.01)	&	(0.01)	&		&		&		&		&		\\
			\textit{LOG-CASE} 	& 0.516	&	0.954	&	0.977	&	0.920	&	1.011	&	0.00646	&	0.00005	&	0.00008	&	0.00048	&	0.00034	\\
			($\beta_2$)			& (0.19)	&	(0.24)	&	(0.25)	&	(0.26)	&	(0.28)	&		&		&		&		&		\\
			\hline
		\end{tabular}
	}
\end{table}

It has been justified that the number of people having at least one travel card ($Y$) can be modeled well through 
a Poisson regression model with significant covariates being their income and 
logarithm (\textit{LOG-CASE}) of the total number of people sampled at the same income level (and intercept).
We have also used the same model with $Y$ having a Poisson distribution with its mean being given by the regression structure
$$
\log E[Y] = \beta_0 + \beta_1 (\mbox{Income}) + \beta_2 (\mbox{\textit{LOG-CASE}}). 
$$
We have estimated these regression coefficients $\boldsymbol{\beta}=(\beta_0, \beta_1, \beta_2)^T$
by our MDPDE at different values of $\alpha>0$, which are presented in Table \ref{TAB:card} along with their standard errors (SEs) and 
the p-values  for testing the significance of individual regression coefficients (i.e., $H_0 : \beta_i = 0$)
obtained through our proposed Wald-type test.
The data do not appear to have any major natural outliers. 
So, to illustrate the claimed robustness of our proposal, we have changed one response value (at the lowest income level) 
from 0 to 10 and repeated the estimation and testing exercise for these contaminated data which are also presented in Table \ref{TAB:card}. 
Note that the column $\alpha=0$ refers to the MLE and the p-values obtained by the usual Wald test.

We can observe from Table \ref{TAB:card} that the MDPDEs are very close to the usual MLE ($\alpha=0$) for the pure data without any contamination
but their standard error increases slightly with increasing values of $\alpha>0$ as expected from our theoretical discussions.
Further, the proposed MDPDE based Wald-type tests at any $\alpha>0$ also yield p-values close to those from usual Wald tests ($\alpha=0$)
indicating the significance of all three regression coefficients $(\beta_0, \beta_1, \beta_2)$ at any reasonable level.
However, with the introduction of just one outlier in a data set of 31 observations ($< 3.5\%$ contamination), 
the MLEs of all the regression coefficients change drastically 
whereas the MDPDEs with $\alpha>0$ are only minimally altered indicating their robust nature. 
Similarly, this small amount of contamination also drastically changes the p-values obtained from the usual Wald test
which now fails to indicate the significance of $\beta_0$ and $\beta_1$ even at the 10\% level.
In contrast, our proposed MDPDE based Wald-type tests  provide much stable p-values for all positive values of $\alpha$ 
and successfully indicate the (true) significance of all regression coefficients even under contamination
justifying their claimed robustness advantages.

\subsection{Epilepsy Data}

Our next illustration is another popular clinical trial data which itself contains few outlying observations
(Leppik et al., 1985; Thall and Vail, 1990). 
We model the total number of epilepsy attacks of 59 patients by a Poisson regression model 
with the available covariates, which are the treatment indicator (versus the control group), 
the eight-week baseline seizure rate (in multiple of 4) prior to randomization, 
the age of the patient (in multiple of 10 years)
and the interaction of treatment with the baseline seizure rate.
These data have been studied by several researchers dealing with robust inference in the Poisson model 
(e.g., Cantoni and Ronchetti, 2001; Hosseinian, 2009; Ghosh and Basu, 2016).
Unlike the credit cards data which does not have any natural outliers, 
here it is observed that there are some outlying observations in the data 
which cause the interaction effects to be insignificant and the coefficient of age to be significant
in classical maximum likelihood based inference,
but any robust methodology yields the opposite inference.

We apply our proposed MDPDE based Wald-type tests for testing the significance of the coefficients of age 
and the interaction effects to see if the correct inference can be obtained even in the presence of outliers.
The resulting p-values are plotted over $\alpha\geq 0$ in Figure \ref{FIG:Edata}.
Clearly, the classical Wald test (at $\alpha=0$) provides incorrect inference at the 5\% level  
in indicating the significance of the age effect and insignificance of the interaction effect between the treatment and the baseline seizure rate.
However, our proposed Wald-type tests with positive values of $\alpha$, approximately in the range 0.3 and above, 
again provide robust (correct) inference under data contamination
indicating a significant effect of the interaction between the treatment and  the baseline seizure rate 
on the number of epilepsy attack of a patient along with insignificant effect of patient's age.
This again illustrates the applicability and advantages of our proposal in getting stable and correct insights 
from any real-life dataset even in the presence of possible outliers in the data.

\begin{figure}[h]
	\centering
	\includegraphics[width=0.6\textwidth]{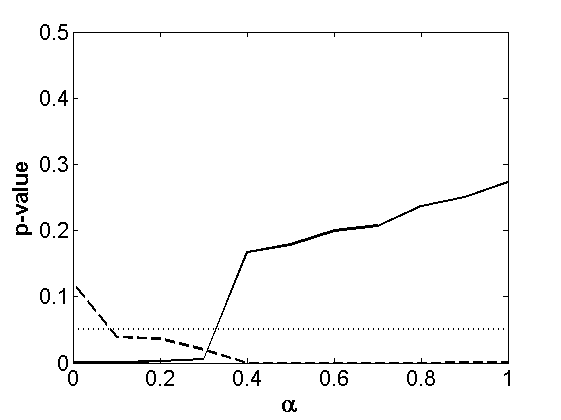}
	\caption{P-values obtained by the MDPDE based Wald-type tests at different $\alpha\geq 0$ 
		for testing the significance of the effect of patient's age (solid line) and the interaction effect 
		between the treatment and baseline seizure rate (dashed line) for the epilepsy data. 
		The dotted line represents the 5\% level.}%
	\label{FIG:Edata}%
\end{figure}

\section{Conclusion and discussions}

The class of generalized linear models represents a very important component of the statistical methodology toolbox. 
In this paper we have dealt with robust tests for testing any general composite null hypothesis 
in the generalized linear models  under the stochastic covariate set up. 
For this purpose, the family of density power divergences have been utilized; 
this results in a collection of Wald-type tests which includes the classical Wald test as a special case, 
but also accommodates other, more robust solutions, some of which attain a very high degree of robustness 
with little loss in power  relative to the classical Wald test for the pure data scenario. 
The asymptotic properties of these tests and their theoretical robustness have been rigorously established. 
We have chosen the Poisson regression model for analyzing count data as the medium of demonstration; 
numerical results illustrating the performances in terms of level and power under different scenarios 
and graphical results illustrating the nature and behavior of the influence functions 
clearly establish the usefulness of our proposed tests.

It is important to note that the proposed test directly depends on the MDPDE
and so some comments about its computation is needed here.
Clearly, the loss function of MDPDEs may have several local minima and 
hence the corresponding estimating equation may have more than one solution. 
So, in order to obtain the global minimizer as the MDPDE for general data applications, 
it is necessary to try different starting values of the optimization algorithm 
and choose the solution having minimum value of the DPD loss function;
these often help  to find  the absolute minimum with a certain probability depending 
on the number and structure of the starting parameter values used. 
This is one advantage of the MDPDE over general M-estimators defined only in terms of estimating equations,
since there may not be a easy way to choose from the multiple roots of those estimating equations. 
However, there is still the requirement of more research and discussion on the computation  
of the MDPDE as well as in terms of obtaining an efficient algorithms for the same purpose,
since the choice of starting values is not clear and may be time consuming. 
We hope to consider such computational aspects further in our future work.

As we have mentioned briefly in Section \ref{sec2.2}, 
our present work examines the robustness of the proposed estimators and tests of hypotheses theoretically 
in terms of boundedness of influence function, which indeed only guarantees their local B-robustness.
We have provided empirical illustrations for the influence function
and the contamination bias for finite sample illustrations. 
However, there are several other robustness measures defined from different perspective,
including breakdown point, V-robustness etc., which are as crucial in examining the robustness properties.
We have provided some limited illustrations of the breakdown property in our numerical illustrations. 
It would, however, be an interesting future work to verify these measures (including breakdown) theoretically for our MDPDE
and the associated Wald-type tests. 
This would also represent an interesting future work. 

Finally, we emphasize again that this work investigated the robustness of the proposed MDPDE and Wald-type tests
against data contamination (e.g., outliers). It would be important to investigate the robustness of these procedures
in other aspects as well, e.g., against misspecification of the model or the design matrix or 
any other assumptions including the linearity of the covariates within the GLM. 
It can be intuitively said that  wrongly specifying the design matrix to be fixed 
while it is random would have the similar effects on the MDPDE as well as on the MLE described in the introduction. 
On the other hand, since these present MDPDE based methods are developed with particular focus on data contamination, 
other non-parametric procedures might outperform them in case of a complete misspecification of the underlying model.
However, more research is surely needed to examine the extent of model misspecification that our MDPDE can tolerate
which we hope to consider in a sequel paper. 

\bigskip\bigskip
\noindent
\textbf{Acknowledgments:}\\
We would like to thank two anonymous referees for their helpful comments and suggestions which have improved the paper. 
This research has been partially supported by Grant PGC2018-095194-B-100 from Ministerio de Ciencia, Innovacion y Universidades (Spanish government).
The work of AG is also partially supported by the INSPIRE Faculty research grant from Department of Science and Technology, Government of India.

\end{document}